\documentclass[envcountsame,envcountsect,orivec,runningheads]{llncs}
\usepackage{all,all_keywords}
\usepackage{hyperref}
\usepackage{amsmath,amsfonts,amssymb,amscd}
\usepackage{algorithm,algorithmic}
\usepackage{latexsym}
\usepackage{enumerate}
\usepackage{graphicx}
\usepackage{xspace}
\usepackage{tikz}
\usepackage{rotating}
\usepackage{xcolor}
\usepackage{colortbl}
\usepackage{wrapfig}

\spnewtheorem*{proofsketch}{Proof sketch}{\itshape}{\rmfamily}

\newcommand{\all}{{\sc ALL}\xspace}
\newcommand{\awn}{{\sc AWN}\xspace}

\newcommand{\Sect}[1]{\autoref{sec:#1}\xspace}
\newcommand{\Fig}[1]{\autoref{fig:#1}\xspace}
\newcommand{\Tab}[1]{\autoref{tab:#1}\xspace}
\newcommand{\Line}[2]{Line~\ref{pro:#1:line#2}}
\DeclareSymbolFont{frenchscript}{OMS}{ztmcm}{m}{n}
\DeclareMathSymbol{\T}{\mathord}{frenchscript}{84}   
\DeclareMathSymbol{\pow}{\mathord}{frenchscript}{80} 
\newcommand{\NN}{
    \ensuremath{%
        \mathop{\rm I\mkern-2.5mu N}%
        \nolimits%
    }%
}

\newcommand{\incr}{\mbox{\texttt{\,+\!+}}}

\newcommand{\multiAssign}[2][\xi]{
  #1{\mbox{\scriptsize 
  $\left [\begin{array}{@{}l@{}}
		#2
  \end{array}\right ]$
}}}

\newcommand{\undefined}{\mathord\uparrow}

\newcommand{\plat}[1]{\raisebox{0pt}[0pt][0pt]{#1}} 

\newcommand{\sidecond}[1]{\mbox{\scriptsize  
$\left(\begin{array}{@{}l@{}}
		#1
 \end{array}\right)$}}

\makeatletter
\def\moverlay{\mathpalette\mov@rlay}
\def\mov@rlay#1#2{\leavevmode\vtop{%
   \baselineskip\z@skip \lineskiplimit-\maxdimen
   \ialign{\hfil$\m@th#1##$\hfil\cr#2\crcr}}}
\newcommand{\charfusion}[3][\mathord]{
    #1{\ifx#1\mathop\vphantom{#2}\fi
        \mathpalette\mov@rlay{#2\cr#3}
      }
    \ifx#1\mathop\expandafter\displaylimits\fi}
\makeatother
\newcommand{\dcup}{\charfusion[\mathbin]{\cup}{\mbox{\Large$\cdot$}}}

\newcommand{\when}{{\rm if}~}
\newcommand{\cinc}{\texttt{c+}}
\newcommand{\bis}{\ensuremath{\mathop{\,\raisebox{.3ex}{$\underline{\makebox[.7em]{$\leftrightarrow$}}$}\xspace}}}

\newcommand{\application}{network\xspace}
\newcommand{\durAck}{\constant{dur\_ack}}%
%

\newcommand{\p}{P}
\newcommand{\q}{Q}
\newcommand{\rts}{RTS\xspace}
\newcommand{\cts}{CTS\xspace}

\newcommand{\uppaal}{\textsc{Uppaal}\xspace}
\newcommand{\prism}{PRISM\xspace}
\newcommand{\apmc}{APMC\xspace}

\newcommand{\dpkt}{frame}
\newcommand{\pkt}{packet}
\newcommand{\chunks}{chunks}
\newcommand{\chnk}{chunk}

\makeatletter
\def\comesfrom{\@transition\leftarrowfill}
\def\goesto{\@transition\rightarrowfill}
\def\ngoesto{\@transition\nrightarrowfill}
\def\Goesto{\@transition\Rightarrowfill}
\def\nGoesto{\@transition\nRightarrowfill}
\def\xmapsto{\@transition\mapstofill}
\def\nxmapsto{\@transition\nmapstofill}
\def\@transition#1{\@@transition{#1}}
\newbox\@transbox
\newbox\@arrowbox
\newbox\@downbox
\def\@@transition#1#2%
   {\setbox\@transbox\hbox
      {\vrule height 1.5ex depth .9ex width 0ex\hskip0.25em$\scriptstyle#2$\hskip0.25em}
   \ifdim\wd\@transbox<1.5em
      \setbox\@transbox\hbox to 1.5em{\hfil\box\@transbox\hfil}\fi
   \setbox\@arrowbox\hbox to \wd\@transbox{#1}
   \ht\@arrowbox\z@\dp\@arrowbox\z@
   \setbox\@transbox\hbox{$\mathop{\box\@arrowbox}\limits^{\box\@transbox}$}
   \dp\@transbox\z@\ht\@transbox 10pt
   \mathrel{\box\@transbox}}
\def\nrightarrowfill{$\m@th\mathord-\mkern-6mu%
  \cleaders\hbox{$\mkern-2mu\mathord-\mkern-2mu$}\hfill
  \mkern-6mu\mathord\not\mkern-2mu\mathord\rightarrow$}
\def\Rightarrowfill{$\m@th\mathord=\mkern-6mu%
  \cleaders\hbox{$\mkern-2mu\mathord=\mkern-2mu$}\hfill
  \mkern-6mu\mathord\Rightarrow$}
\def\nRightarrowfill{$\m@th\mathord=\mkern-6mu%
  \cleaders\hbox{$\mkern-2mu\mathord=\mkern-2mu$}\hfill
  \mkern-6mu\mathord\not\mathord\Rightarrow$}
\def\mapstofill{$\m@th\mathord\mapstochar\mathord-\mkern-6mu%
  \cleaders\hbox{$\mkern-2mu\mathord-\mkern-2mu$}\hfill
  \mkern-6mu\mathord\rightarrow$}
\def\nmapstofill{$\m@th\mathord\mapstochar\mathord-\mkern-6mu%
  \cleaders\hbox{$\mkern-2mu\mathord-\mkern-2mu$}\hfill
  \mkern-6mu\mathord\not\mkern-2mu\mathord\rightarrow$}
\makeatother 
\newcommand{\ar}[1]{\mathrel{\goesto{#1}}}            
\newcommand{\nar}[1]{\mathrel{\ngoesto{#1\;}}}        


\title{A Process Algebra for Link Layer Protocols}
\authorrunning{R.J. van Glabbeek, P. H\"ofner and M. Markl}
  \author{
  Rob van Glabbeek \inst{1,2},
  Peter H\"ofner\inst{1,2}\and
  Michael Markl\inst{1,3}
  }
\institute{
  Data61, CSIRO, Australia\and
  Computer Science and Engineering, University of New South Wales, Australia\and
  Institut f\"ur Informatik, Universit\"at Augsburg, Germany
}

\begin{document}%
\maketitle 
\setcounter{footnote}{0}
  \renewcommand{\theoremautorefname}{Theorem}
  \renewcommand{\tableautorefname}{Table}
  \renewcommand{\figureautorefname}{Figure}
  \renewcommand{\sectionautorefname}{Section}
  \renewcommand{\subsectionautorefname}{Section}
  \renewcommand{\subsubsectionautorefname}{Section}
  \newcommand{\algorithmautorefname}{Process}

\begin{abstract}
We propose a process algebra for link layer protocols, featuring a unique mechanism for modelling frame collisions.
We also formalise suitable liveness properties for link layer protocols specified in this framework. 
To show applicability we model and analyse two versions of  the Carrier-Sense Multiple Access with Collision Avoidance (CSMA/CA) protocol. 
Our analysis confirms the hidden station problem for the version without virtual carrier sensing.
However, we show that the version with virtual carrier sensing not only overcomes
this problem, but also the exposed station problem with probability $1$.
Yet the protocol cannot guarantee {\pkt} delivery, not even with probability~$1$.
\end{abstract}

\section{Introduction}\label{sec:introduction}
The (data) link layer is the 2nd layer of the ISO/OSI model of computer networking~\cite{ISOIEC7498}.
Amongst others, it is responsible for the transfer of data between adjacent nodes in Wide Area Networks (WANs)
and Local Area Networks~(LANs).
\newcommand{\oxford}[1]{}

Examples of link layer protocols are Ethernet for
LANs~\cite{IEEE8023}, the Point-to-Point Protocol~\cite{rfc1661}\oxford, and the High-Level Data Link Control protocol (e.g.\ \cite{Friend88}).
Part of this layer are also multiple access protocols  such as the Carrier-Sense Multiple Access with Collision Detection (CSMA/CD) protocol 
for re-transmission in Ethernet bus networks and hub networks, or the Carrier-Sense Multiple Access with Collision Avoidance 
(CSMA/CA) protocol~\cite{IEEE80211,IEEE80215} in wireless networks.

One of the unique characteristics of the link layer is that when devices attempt to use a medium simultaneously, \emph{collisions of messages}
occur. 
So, any  modelling language and formal analysis of layer-2 protocols has to support such collisions.
Moreover, some protocols are of probabilistic nature: CSMA/CA for example chooses time slots probabilistically with discrete uniform distribution.

As we are not aware of any formal framework with primitives for modelling data collisions,
this paper introduces a process algebra
for modelling and analysing link layer protocols.
In \Sect{pa_1} we present an algebra featuring a unique mechanism for modelling collisions, `hard-wired' in the semantics.
It is
the nonprobabilistic fragment of the Algebra for Link Layer protocols (\all), which we introduce in \Sect{pa_2}. 
In \Sect{liveness} we formulate \emph{packet delivery}, a liveness property that ideally ought to hold
for link layer protocols, either outright, or with a high probability.
In \Sect{csma} we use this framework to formally
model and analyse the CSMA/CA protocol.\pagebreak[3]

Our analysis confirms the hidden station problem for the version of CSMA/ CA 
without virtual carrier sensing (\autoref{sec:Hidden}).
However, we also show that the version with virtual carrier sensing overcomes not only this problem, but also the exposed station problem with probability $1$.
Yet the protocol cannot guarantee {\pkt} delivery, not even with probability $1$.

\section{A Non-Probabilistic Subalgebra}\label{sec:pa_1}
In this section we propose a timed process algebra that can model the collision of 
link layer messages, called \emph{frames}.%
\footnote{As it is the nonprobabilistic fragment of a forthcoming algebra we do not name it.}
It can be used for link layer protocols that do not feature probabilistic choice, and
is inspired by the (Timed) Algebra for Wireless Networks (({\sc T}-)\awn)~\cite{ESOP12,TR13,ESOP16}, a
process algebra suitable for modelling and analysing protocols on layers 3 (network) and 4 (transport)
of the OSI model.

The process algebra models a (wired or wireless) network as an encapsulated parallel composition of network nodes. 
Due to the nature of the protocols under consideration, on each node exactly one sequential process is running.
The algebra features a discrete model of time, where each sequential process maintains a local variable
\now\ holding its local clock value---an integer. We employ only one clock for each sequential
process. All sequential processes in a network synchronise in taking time steps, and at each time
step all local clocks advance by one unit. Since this means that all clocks are in sync and do not run at different speeds
it is clear that we do not consider the problem of clock shift.
For the rest, the variable \now\ behaves like
any other variable maintained by a process: its value can be read when evaluating guards, thereby
making progress time-dependant, and any value can be assigned to it, thereby resetting the local
clock. Network nodes communicate with their direct neighbours---those nodes that are in transmission
range.  The algebra provides a mobility option that allows nodes to move in or out of transmission
range. The encapsulation of the entire network inhibits communications 
between network nodes and the outside world, with the exception of the 
receipt and delivery of data packets from or to clients (the higher OSI layers).

\subsection{A Language for Sequential Processes}
The internal state of a process is determined, in part, by the values of certain data variables that are maintained by that process.  
To this end, we assume a data structure with several types, variables ranging over these types, operators and predicates. 
Predicate logic yields terms (or \emph{data expressions}) and formulas to denote data values and statements about them. 
Our data structure always contains the types
\tTIME,
\tDATA,
\tMSG,
\tCHUNK,
$\tIP$ and
$\pow{(\tIP)}$
of discrete \emph{time values}, which we take to be integers, \emph{\application layer data},
\emph{messages}, \emph{chunks} of messages that take one time unit to transmit,
\emph{node identifiers} and \emph{sets of node identifiers}.
We further assume that there are variables \now\ of type {\tTIME} and $\rfr$ of type \tCHUNK.
In  addition, we assume a set of \emph{process names}.\pagebreak[4]
Each process name $X$ comes with a \emph{defining equation}\vspace{-1.6ex}
\[
X(\keyw{var}_1,\ldots,\keyw{var}_n) \stackrel{{\it def}}{=}  \p\ ,\vspace{-2pt}
\]
in which $n\in\NN$, $\keyw{var}_i$ are variables and $\p$ is a \emph{sequential process expression} defined by the grammar below. 
It may contain the variables $\keyw{var}_i$ as well as $X$. 
However, all occurrences of data variables in $\p$ have to be \emph{bound}.%
\footnote{An occurrence of a data variable in $\p$ is \emph{bound} if it is one of the variables
  $\keyw{var}_i$, one of the two special variables $\now$ or $\rfr$, a variable \keyw{var} occurring
  in a subexpression $\assignment{\keyw{var}\mathop{:=}\dexp{exp}}\q$, an occurrence in a
  subexpression $\cond{\varphi}\q$ of a variable occurring free in $\varphi$, or
  a variable \keyw{data} or \keyw{dest} occurring in a subexpression
  $\textbf{newpkt}(\keyw{data},\keyw{dest}).\q$. Here $\q$ is an arbitrary sequential process expression.}
The choice of the underlying data structure and the process names with their defining equations can be tailored to any particular application of our language.

The \emph{sequential process expressions} are given by the following grammar:
\begin{eqnarray*}
\SP &::=& 
	X(\dexp{exp}_1,\ldots,\dexp{exp}_n) ~\mid~ \cond{\varphi}\SP ~\mid~ \assignment{\keyw{var}:=\dexp{exp}} \SP ~\mid~ \alpha.\SP ~\mid~ \SP+\SP\\
	\alpha &::=& \transmit{\dexp{ms}} ~\mid~ \textbf{newpkt}(\keyw{data},\keyw{dest})~\mid~ \deliver{\keyw{data}}
\end{eqnarray*}
Here $X$ is a process name, $\dexp{exp}_i$ a data expression of the same type as
  $\keyw{var}_i$, $\varphi$ a data formula, $\keyw{var}\mathop{:=}\dexp{exp}$ an assignment of a
  data expression \dexp{exp} to a variable \keyw{var} of the same type,
  \dexp{ms} a data expression  of type {\tMSG}, and
  \keyw{data}, \keyw{dest} data variables of types {\tDATA}, {\tIP} respectively.

Given a valuation of the data variables by concrete data values, the
sequential process $\cond{\varphi}\p$ acts as $\p$ if $\varphi$
evaluates to {\tt true}, and deadlocks if $\varphi$ evaluates to
{\tt false}. In case $\varphi$ contains free variables that are not
yet interpreted as data values, values are assigned to these variables
in any way that satisfies $\varphi$, if possible.
The process $\assignment{\keyw{var}\mathop{:=}\dexp{exp}}\p$
acts as $\p$, but under an updated valuation of the data variable~\keyw{var}.
The process $\p\mathop{+}\q$ may act either as $\p$ or as
$\q$, depending on which of the two processes is able to act at all.  In a
context where both are able to act, it is not specified how the choice
is made. The process $\alpha.\p$ first performs the action
$\alpha$ and subsequently acts as $\p$.
The above behaviour is identical to \awn, and many other standard process algebras.
The action $\transmit{\dexp{ms}}$ transmits (the data value bound to the expression) $\dexp{ms}$ to all other network nodes within transmission range.
The action $\textbf{newpkt}(\keyw{data},\keyw{dest})$ models the injection by the \application
  layer of a data packet $\keyw{data}$ to be transmitted to a destination $\keyw{dest}$.
Technically, $\keyw{data}$ and $\keyw{dest}$ are variables that will be bound to the obtained values
upon receipt of a $\textbf{newpkt}$. 
Data is delivered to the \application layer by \deliver{\keyw{data}}.
In contrast to \awn, we do not have a primitive for receiving messages from neighbouring nodes, because our
  processes are \emph{always} listening to neighbouring nodes, in parallel with anything else they do.

As in \awn, the internal state of a sequential process described by an expression
$\p$ is determined by~$\p$, together with a \emph{valuation} $\xi$
associating values $\xi(\keyw{var})$ to variables \keyw{var}
maintained by this process. Valuations naturally extend to
\emph{$\xi$-closed} expressions---those in which all variables are
either bound or in the domain of~$\xi$.
We denote the valuation that assigns the value $v$ to the variable \keyw{var}, and agrees with $\xi$
on all other variables, by $\xi[\keyw{var}:= v]$.\pagebreak[3]
The valuation $\xi_{|S}$
agrees with $\xi$ on all variables $\keyw{var}\in S$ and is undefined otherwise.\pagebreak[4]
Moreover we

\mbox{}\vspace{-22pt} 

\noindent
use $\xi[\keyw{var}\incr]$ as an abbreviation for
$\xi[\keyw{var}:=\xi(\keyw{var})\mathop{+}1]$, for
suitable types.

To capture the durational nature of transmitting a message between network nodes,
we model a message as a sequence of \emph{\chunks}, each of which 
takes one time unit to transmit.
The function $\fndur\mathop: \tMSG \rightarrow \tTIME_{>0}$ calculates the amount of time steps needed for a sending a message, i.e.\ it calculates the number of \chunks.\linebreak[4] 
We employ the internal data type
$\tCHUNK :=  \{\chunk mc\mid m\in\tMSG,1\leq c\leq\dur{m}\} \cup \{\conflict,\medIdle\}$.
The {\chnk} $\chunk mc$ indicates the $c$\,th fragment of a message $m$.
Data conflicts---junk transmitted via the medium---is modelled by the special chunk $\conflict$,
and the absence of an incoming chunk is modelled by $\medIdle$.

Our process algebra maintains a variable $\rfr$ of type $\tCHUNK$,
storing the fragment of the current message received so far.
{\makeatletter
\let\par\@@par
\par\parshape0
\everypar{}\begin{wrapfigure}[10]{r}{0.352\textwidth}
 \vspace{-8ex}
\[\begin{array}{@{\hspace{-2pt}}c|c|c@{}}
\rfr & \dval{ch} & \rfr\star\dval{ch} \\
\cline{1-3}
  *          & \conflict & \conflict \\
  *          & \medIdle  & \medIdle  \\
  *          & \chunk m1 & \chunk m1 \\
\chunk mc    & \chunk m{c{+}1} & \chunk m{c{+}1} \\
\dexp{rfr}   & \chunk m{c{+}1} & \conflict \\ && \hfill \mbox{ if~} \dexp{rfr}\neq\chunk mc\!\!\!\!\\
\end{array}\]
\end{wrapfigure}
\noindent
As a value of this variable, $\chunk mc$ indicates that the first $c$ {\chnk}s
of message $m$ have been received in order; $\conflict$ indicates that the last incoming {\chnk} was not the expected (next) 
part of a message in progress, and  $\medIdle$ indicates that the channel was idle during the last
time step. The table on the right, with $*$ a wild card, shows how the value of $\rfr$ evolves upon receiving a
new {\chnk}$~\dval{ch}$.
\par}

Specifications may refer to the data type $\tCHUNK$ only through the Boolean functions
$\new$---having a single argument \dval{msg} of type $\tMSG$---and $\idle$, defined by
$\new(\dval{msg}) := (\rfr=(\dval{msg}:\dur{\dval{msg}})$
and $\idle := (\keyw{rfr}=\medIdle)$.
A guard $[\new(\dval{msg})]$ evaluates to true iff a new message $\dval{msg}$ has just been received;
$[\idle]$ evaluates to true iff in the last time slice the medium was idle.

The structural operational semantics of \Tab{sos sequential} describes how one internal state can evolve into another by performing an \emph{action}.
The set $\act$ of actions consists of
$\transmitAct{m}{c}{\dval{ch}}$,
$\w(\dval{ch})$,
$\textbf{newpkt}(\dval{d},\dval{dest})$,
$\deliver{\dval{d}}$,
and internal actions~$\tau\!$, for each choice of $m \mathop{\in}\tMSG$, 
$c\mathop{\in}\{1,\dots,\dur{m}\}$,
$\dval{ch}\in\tCHUNK$,
$\dval{d}\mathop{\in}\tDATA$ and
$\dval{dest}\mathop{\in}\tIP$,
where the first two actions are time consuming.
On every time-consuming action, each process receives a {\chnk} $\dval{ch}$ and updates the variable
$\rfr$ accordingly; moreover, the variable $\now$ is incremented on all
process expressions in a (complete) network synchronously.

Besides the special variables $\now$ and $\rfr$, the formal semantics employs an internal variable 
$\counter\mathop\in\NN$ that enumerates the {\chnk}s of split messages and is used to identify which
{\chnk} needs to be sent next.
The variables \now, \rfr\ and \counter\ are not meant to be changed by \all specifications, e.g.\ by using  assignments.
We call them read-only and collect them in the set $\READONLY=\{\now, \rfr, \counter\}$.

Let us have a closer look at the rules of \Tab{sos sequential}.
\begin{sidewaystable}[ph!]
\vspace{-5pt}
{\small
\[\begin{array}{@{}lr@{~\hspace{-.5pt}}c@{~\hspace{-.5pt}}l@{~\hspace{-.5pt}}r}
(1)~~~~ & 
\xi,\transmit{\dexp{ms}}.\p 
	& \ar{\transmitAct{\xi(\dexp{ms})}{\cinc}{\dval{ch}}} 
	& \multiAssign[\xi]{
                        \counter\incr\\[-1pt]
			\rfr:=\rfr\star\dval{ch}\\[-1pt]
			\now\incr
		},\transmit{\xi(\dexp{ms})}.\p
    &\begin{array}{@{}r@{}}
    		\sidecond{\when\cinc < \dur{\xi(\dexp{ms})}} \\
	        \sidecond{\forall \dval{ch}\in\tCHUNK }
     \end{array}
\\[6pt]
\rowcolor[gray]{.9}
(2) &
\xi,\transmit{\dexp{ms}}.\p
	& \ar{\transmitAct{\xi(\dexp{ms})}{\cinc}{\dval{ch}}}
	& \multiAssign{
			\counter:=0\\[-1pt]
			\rfr:=\rfr\star\dval{ch}\\[-1pt]
			\now\incr
		},\p
    &\begin{array}{@{}r@{}}
    		\sidecond{\when\cinc = \dur{\xi(\dexp{ms})}} \\
	        \sidecond{\forall \dval{ch}\in\tCHUNK }
     \end{array}
\\[6pt]
(3) &
\xi,\textbf{newpkt}(\keyw{data},\keyw{dest}).\p
	& \ar{\obtain{d}{\dval{dest}}}
	& \multiAssign{\keyw{data}:=d\\[-1pt]\keyw{dest}:=\dval{dest}},\p
    &\begin{array}{@{}r@{}} 
		\sidecond{ \forall d \in\tDATA,~ \dval{dest}\in\tIP}
	\end{array}
\\[6pt]
\rowcolor[gray]{.9}
(4) &
\xi,\textbf{newpkt}(\keyw{data},\keyw{dest}).\p
	& \ar{\w(\dval{ch})}
        &        \multiAssign{
				\rfr:=\rfr\star\dval{ch}\\[-1pt]
				\now\incr
			},\textbf{newpkt}(\keyw{data},\keyw{dest}).\p
	&        \sidecond{\forall \dval{ch}\in\tCHUNK }
\\[6pt]
(5) &
\xi,\deliver{\keyw{data}}.\p
	& \ar{\deliver{\xi(\keyw{data})}}
	& \xi,\p
\\[6pt]
\rowcolor[gray]{.9}
(6) &
\xi,\assignment{\keyw{var}:=\dexp{exp}}\p
	& \ar{\tau}
	& \multiAssign{\keyw{var}:=\xi(\dexp{exp})},\p
        &
\\[6pt]
(7) &
\multicolumn{3}{c}{\displaystyle
		\frac
		{\multiAssign[\xi_{|\READONLY}]{\keyw{var}_i:=\xi(\dexp{exp}_i)}_{i=1}^n,\p
			\ar{a}
			\zeta,\p'}
		{\xi,X(\dexp{exp}_1,\ldots,\dexp{exp}_n)
			\ar{a}
			\zeta,\p'}
		~\mbox{(\scriptsize$X(\keyw{var}_1,\ldots,\keyw{var}_n) \stackrel{{\it def}}{=} \p$)}
	}
              &\sidecond{\forall a\in\act -\{\w(\dval{ch})\mid \dval{ch}\in\tCHUNK\}}
\\[12pt]
\rowcolor[gray]{.9}
(8) &
\multicolumn{3}{c}{\displaystyle
		\frac
		{\multiAssign[\xi_{|\READONLY}]{\keyw{var}_i:=\xi(\dexp{exp}_i)}_{i=1}^n,\p
			\ar{\w(\dval{ch})}
			\zeta,\p'}
		{\xi,X(\dexp{exp}_1,\ldots,\dexp{exp}_n)
			\ar{\w(\dval{ch})}
                 \multiAssign{
				\rfr:=\rfr\star\dval{ch}\\[-1pt]
				\now\incr
			},X(\dexp{exp}_1,\ldots,\dexp{exp}_n)}
		~\mbox{(\scriptsize$X(\keyw{var}_1,\ldots,\keyw{var}_n) \stackrel{{\it def}}{=} \p$)}
	}
	&\sidecond{\forall \dval{ch}\in\tCHUNK }
\\[6pt]
(9) &
\xi,\p
	& \ar{\w(\dval{ch})}
	& \multiAssign{
		\rfr:=\rfr\star\dval{ch}\\[-1pt]
		\now\incr
		},\p
& \begin{array}{@{}r@{}} \sidecond{\when\xi(\p)\undefined} \\
	\sidecond{\forall \dval{ch}\in\tCHUNK }
  \end{array}
\\[8pt]
\rowcolor[gray]{.9}
(10) &
\multicolumn{3}{c}{\displaystyle
		\frac{\xi,\p \ar{a} \zeta,\p'}{\xi,\p+\q \ar{a} \zeta,\p'} \quad
		\frac{\xi,\q \ar{a} \zeta,\q'}{\xi,\p+\q \ar{a} \zeta,\q'}
              }
              &\sidecond{\forall a\in\act -\{\w(\dval{ch})\mid \dval{ch}\in\tCHUNK\}}
\\[12pt]
(11) &
\multicolumn{3}{c}{\displaystyle
		\frac
		{\xi,\p \ar{\w(\dval{ch})} \zeta,\p' \quad \xi,\q \ar{\w(\dval{ch})} \zeta',\q'}
		{\xi,\p+\q \ar{\w(\dval{ch})} \zeta,\p'+\q'}
               }
               &\sidecond{\forall \dval{ch}\in\tCHUNK}
\\[8pt]
\rowcolor[gray]{.9}
(12) &
\multicolumn{3}{c}{\displaystyle
		\frac
		{\xi \stackrel{\varphi}{\rightarrow}\zeta}
		{\xi,\cond{\varphi}\p
			\ar{\tau}
			\zeta,\p} 
			\qquad
		\frac
		{\xi \nar{\varphi}}
		{\xi,\cond{\varphi}\p 
			\ar{\w(\dval{ch})}
			\multiAssign{
				\rfr:=\rfr\star\dval{ch}\\[-1pt]
				\now\incr
			},\cond{\varphi}\p}}
	& \sidecond{\forall \dval{ch}\in\tCHUNK}
\end{array}\]
} 
\vspace{-7pt}
\caption{\em Structural operational semantics for sequential process expressions}
\label{tab:sos sequential}
\end{sidewaystable}

The first two rules describe the sending of a message $\dexp{ms}$.
Remember that $\dur{\dexp{ms}}$ calculates the time needed to send $\dexp{ms}$.
The counter $\counter$ keeps track of the time passed already.
The action $\transmitAct{m}{c}{\dval{ch}}$ occurs when the node transmits the fragment $\chunk{m}{c}$;
simultaneously, it receives the fragment $\dval{ch}$.\footnote{Normally, a node is in its own
  transmission range. In that case the received {\chnk} $\dval{ch}$ will be either the {\chnk} $\chunk
  mc$ it is 
  transmitting itself, or \conflict\ in case some other node within transmission range is
  transmitting as well.} \pagebreak
The counter \counter\ is $0$ before a message is sent, and is incremented before
the transmission of each chunk.
So,\ each {\chnk} sent has the form $\chunk{\xi(\dexp{ms})}{\xi(\counter){+}1}$.
To ease readability we abbreviate ${\xi(\counter){+}1}$ by $\cinc$.
In case the (already incremented) counter $\cinc$ is strictly smaller than the number of {\chnk}s needed to send $\xi(\dexp{ms})$,
another \textbf{transmit}-action is needed (Rule 1); if the last fragment has been sent ($\cinc=\dur{\xi(\dexp{ms})}$) the process can continue to act as $\p$ (Rule 2). 

The actions $\obtain{d}{\dval{dest}}$ and $\deliver{\dval{d}}$ are instantaneous and model the submission of data $d$ from the \application layer, destined for $\dval{dest}$,
and the delivery of data $d$ to the \application layer, respectively.
The process $\obtain{d}{\dval{dest}}.P$ has also the possibility to wait, namely if no \application layer instruction arrives.

Rule 6 defines a rule for assignment in a straightforward fashion; only 
the valuation of the variable \keyw{var} is updated.

In Rules\ 7 and 8, which define recursion, 
$\xi_{|\READONLY}[\keyw{var}_i:=\xi(\dexp{exp}_i)]_{i=1}^n$ 
is the valuation that
\emph{only} assigns the values $\xi(\dexp{exp}_i)$ to the variables $\keyw{var}_i$,
for $i=1,\ldots,n$, and maintains the values of the variables $\now$, $\rfr$ and $\counter$.
These rules state that a defined process $X$ has the same
transitions as the body $p$ of its defining equation.
In case of a $\w$-transition, the sequential process does not progress, and accordingly the
recursion is not yet unfolded.

Most transition rules so far feature statements of the form $\xi(\dexp{exp})$ where \dexp{exp} is
  a data expression. The application of the rule depends on
  $\xi(\dexp{exp})$ being defined.
Rule 9 covers all cases where the above rules cannot  be applied since at least one data expression in an action $\alpha$ is not defined.
A state $\xi,\p$ is \emph{unvalued}, denoted by $\xi(p)\undefined$, if $\p$ has the form
$\transmit{\dexp{ms}}.\p$,
$\deliver{\keyw{data}}.\p$,
$\assignment{\keyw{var}:=\dexp{exp}}\p\,$ or
$\,X(\dexp{exp}_1,\ldots,\dexp{exp}_n)$ with
either $\xi(\dexp{ms})$ or $\xi(\keyw{data})$ or
$\xi(\dexp{exp})$ or some $\xi(\dexp{exp}_i)$ undefined. From such a state the process can
merely wait.

A process $\p+\q$ can wait \emph{only} if both $\p$ and $\q$ can do the same;
if either $\p$ or $\q$ can achieve `proper' progress, the choice process $\p+\q$
always chooses progress over waiting.
A simple induction shows that
if $\xi,P\ar{\w(\dval{ch})}\zeta, P'$ and $\xi,Q\ar{\w(\dval{ch})}\zeta', Q'$
then $P=P'$, $Q=Q'$ and $\zeta=\zeta'$.

The first rule of (12),  describing the semantics of guards $\cond{\varphi}$,  is taken from \awn.
Here \plat{$\xi
  \stackrel{\varphi}{\rightarrow}\zeta$} says that $\zeta$ is an
extension of $\xi$, i.e.\ a valuation that agrees with $\xi$ on all
variables on which $\xi$ is defined, and evaluates other variables
occurring free in $\varphi$, such that the formula $\varphi$ holds
under $\zeta$. All variables not free in $\varphi$ and not evaluated by $\xi$ 
are also not evaluated by $\zeta$.
Its negation \plat{$\xi\nar{\varphi}$} says that no such extension exists, and
thus, that $\varphi$ is false in the current state, no matter how we interpret
the variables whose values are still undefined.
If that is the case, the process $[\varphi]p$ will idle by performing the
action $\w(\dval{ch})$.

\subsection{A Language for Node Expressions}
\newcommand{\mymid}{\!\mid\!}
We model network nodes in the context of a (wireless) network by
\emph{node expressions} of the form\\[-2ex]
\centerline{$\nexpr{\dval{id}}{(\xi,\SP)}{R}\ .$}\vspace{1ex}
Here $\dval{id} \in \tIP$ is the \emph{address} of the node, $\SP$ is a
sequential process expression with a valuation $\xi$, and $R\in\pow(\tIP)$ is 
the \emph{range} of the node, defined as the set of nodes within transmission
range of $\dval{id}$. Unlike \awn, the process algebra does not offer a parallel operator 
for combining sequential processes; such an operator is not needed due to the nature of link layer protocols.

In the semantics of this layer it is crucial to handle frame collisions. 
The idea is that all {\chnk}s sent are recorded, together with the respective recipient. 
In case a node receives more than one chunk at a time, a conflict is raised, 
as it is\linebreak[3] impossible to send two or more messages via the same medium at the same time. 

The formal semantics for node expressions, presented in \Tab{sos node}, uses
transition labels $\transmission{\tset{T}}{\tset{R}}$,
$\colonact{\dval{id}}{\deliver{\dval{d}}}$,
$\colonact{\dval{id}}{\obtain{d}{\dval{id\/}'}}$,
$\connect{\dval{id}}{\dval{id\/}'}$, $\disconnect{\dval{id}}{\dval{id\/}'}$ and $\tau$, 
with partial functions $\tset{T},\tset{R}\mathop:\tIP\rightharpoonup\tCHUNK$,
$\dval{id}, \dval{id\/}'\in\tIP$, and $d\in\tDATA$.

\begin{table}[t]
\vspace{-12pt}
\small
\[\begin{array}{@{}c@{}}
\displaystyle
\frac{P \ar{\w(\medIdle)} P'}
{\rule[11pt]{0pt}{1pt}
	\nexpr{\dval{id}}{P}{R}
	\ar{\transmission{\emptyset}{\emptyset}}
	\nexpr{\dval{id}}{P'\!}{R}}
\qquad
\displaystyle
\frac
{P \ar{\transmitAct{m}{c}{\medIdle}} P'}
{\rule[11pt]{0pt}{1pt}
	\nexpr{\dval{id}}{P}{R}
	\ar{\transmission{\{(r,\chunk{m}{c})\mymid r{\in} R\}}{\emptyset}}
	\nexpr{\dval{id}}{P'\!}{R}}
\\[16pt]
\displaystyle
\frac
{P \ar{\w(\dval{ch})}\quad\scriptstyle(\dval{ch}{\neq}\medIdle)}
{\rule[11pt]{0pt}{1pt}
	\nexpr{\dval{id}}{P}{R}
	\ar{\transmission{\emptyset}{\{(\dval{id},\dval{ch})\}}}
	\nexpr{\dval{id}}{P'\!}{R}}
\quad
\frac
{P \ar{\transmitAct{m}{c}{\dval{ch}}} P'\quad\scriptstyle(\dval{ch}{\neq}\medIdle)}
{\rule[11pt]{0pt}{1pt}
	\nexpr{\dval{id}}{P}{R}
	\ar{\transmission{\{(r,\chunk{m}{c})\mymid r{\in} R\}}{\{(\dval{id},\dval{ch})\}}}
	\nexpr{\dval{id}}{P'\!}{R}}
\\[16pt]
\displaystyle
\frac
{P \ar{\deliver{\dval{d}}} P'}
{\rule[11pt]{0pt}{1pt}
	\nexpr{\dval{id}}{P}{\!R}
	\mathbin{\ar{\colonact{\dval{id}}{\deliver{\dval{d}}}}}
	\nexpr{\dval{id}}{P'\!}{\!R}}
\quad
\frac
{P \ar{\obtain{d}{\dval{dest}}} P'}
{\rule[11pt]{0pt}{1pt}
	\nexpr{\dval{id}}{P}{\!R}
	\mathbin{\ar{\colonact{\dval{id}}{\obtain{d}{\dval{dest}}}}}
	\nexpr{\dval{id}}{P'\!}{\!R}}
\quad
\displaystyle
\frac
{P \ar{\tau} P'}
{\nexpr{\dval{id}}{P}{\!R}
	\mathbin{\ar{\tau}}
	\nexpr{\dval{id}}{P'\!}{\!R}}
\\[16pt]
\displaystyle
\nexpr{\dval{id}}{P}{R}
\ar{\connect{\dval{id}}{\dval{id\/}'}}
\nexpr{\dval{id}}{P}{R\cup\{\dval{id\/}'\}}
\qquad
\nexpr{\dval{id}}{P}{R}
\ar{\disconnect{\dval{id}}{\dval{id\/}'}}
\nexpr{\dval{id}}{P}{R-\{\dval{id\/}'\}}
\\[6pt]
\displaystyle
\nexpr{\dval{id}}{P}{R}
\ar{\connect{\dval{id\/}'}{\dval{id}}}
\nexpr{\dval{id}}{P}{R\cup\{\dval{id\/}'\}}
\qquad
\nexpr{\dval{id}}{P}{R}
\ar{\disconnect{\dval{id\/}'}{\dval{id}}}
\nexpr{\dval{id}}{P}{R-\{\dval{id\/}'\}}
\\[4pt]\displaystyle
  \frac{\dval{id} \not\in \{\dval{id\/}'\!,\dval{id\/}''\}}
  {\rule[13pt]{0pt}{1pt}
   \nexpr{\dval{id}}{P}{R} \ar{\textbf{connect}(\dval{id\/}'\!,\dval{id\/}'')} \nexpr{\dval{id}}{P}{R}}
\qquad
  \frac{\dval{id} \not\in \{\dval{id\/}'\!,\dval{id\/}''\}}
  {\rule[13pt]{0pt}{1pt}
  \nexpr{\dval{id}}{P}{R} \ar{\textbf{disconnect}(\dval{id\/}'\!,\dval{id\/}'')} \nexpr{\dval{id}}{P}{R}}
\end{array}
\]
	\caption{\em Structural operational semantics for node expressions}
	\label{tab:sos node}
	\vspace{-23pt}
\end{table}

All time-consuming actions on process level ($\transmitAct{m}{c}{\!\dval{ch}}$
and $\w(\dval{\!ch})$) are transformed into an action
$\transmission{\tset{T}}{\tset{R}}$ on node level:
the first argument $\tset{T}$ maps
$\dval{dest}$ to $\chunk{m}{c}$ if and only if the {\chnk} $\chunk{m}{c}$ is transmitted to $\dval{dest}$.
The second argument $\tset{R}$
maps $\dval{id}$ to $\chunk{m}{c}$ if and only if the {\chnk} $\chunk{m}{c}$
is received on process level at node $\dval{id}$. For the sos-rules of \Tab{sos node} we use 
the set-theoretic presentation of partial functions.
The two rules for $\textbf{wait}$ set $\tset{T}:=\emptyset$, as no {\chnk}s are transmitted; the
rules for $\textbf{transmit}$ allow a transmitted {\chnk} \chunk{m}{c} to travel to all nodes within transmission
range:  $\tset{T}:=\{(r,\chunk{m}{c})\mymid r\in R\}$.
In case that during the transmission or waiting no {\chnk} is received ($\dval{ch}=\medIdle$)
we set $\tset{R}=\emptyset$; otherwise $\tset{R}=\{(\dval{id},\dval{ch})\}$,
indicating that {\chnk} $\dval{ch}$ is received by node~$\dval{id}$.

The actions $\colonact{\dval{id}}{\obtain{d}{\dval{dest}}}$ and $\colonact{\dval{id}}{\deliver{\dval{d}}}$ as well as the internal 
actions $\tau$
are simply inherited by node expressions from the processes that run
on these nodes.

$\!$The remaining rules of \Tab{sos node} model the mobility aspect of wireless networks; 
the rules are taken straight from \awn~\cite{ESOP12,TR13}.
We allow actions
$\textbf{connect}(\dval{id},\dval{id\/}')$ and
$\textbf{disconnect}(\dval{id},\dval{id\/}')$ for
$\dval{id},\dval{id\/}'\mathop\in \tIP$ modelling a change in network
topology.  These actions can be thought of as occurring nondeterministically, or as
actions instigated by the environment of the modelled network
protocol.  In this formalisation node $\dval{id\/}'$ is in
the range of node $\dval{id}$, meaning that $\dval{id\/}'$ can receive
messages sent by $\dval{id}$, if and only if $\dval{id}$ is in the
range of $\dval{id\/}'$. To break this symmetry, one just skips the last
four rules of Table~\ref{tab:sos node} and replaces the synchronisation
rules for \textbf{connect} and \textbf{disconnect} in
Table~\ref{tab:sos network} by interleaving rules (like the ones for
\textbf{deliver}, \textbf{newpkt} and $\tau$)~\cite{ESOP12}.
For some applications a wired or non-mobile network need to be considered. 
In such cases the last six rules of \Tab{sos node} are dropped.

Whether a node $\nexpr{\dval{id}}{\p}{R}$ receives its own transmissions 
depends on whether $\dval{id}\in R$. Only if $\dval{id}\in R$ our process algebra 
will disallow the transmission from and to a single node $\dval{id}$ at the same time, 
yielding a \conflict.

\subsection{A Language for Networks}
A \emph{partial network} is modelled by a \emph{parallel composition}
$\|$ of node expressions, one for every node in the network. A
\emph{complete network} is a partial network within an
\emph{encapsulation operator} $[\_]$, which limits the communication between
network nodes and the outside world to the receipt and delivery of data packets
to and from the \application layer.

The syntax of networks is described by the following grammar:\\[2mm]
\centerline{$
N ::= [M_{T}^{T}] \qquad
M_{S_1\dcup S_2}^T ::= M_{S_1}^T \| M_{S_2}^T \qquad
M_{\{\dval{id}\}}^T ::= \nexpr{\dval{id}}{(\xi, \SP)}{R}\ ,$}\\[2mm]
with $\{\dval{id}\}\cup R \subseteq T \subseteq \tIP$.
Here $M_S^T$ models a partial network describing the behaviour of all nodes $\dval{id}\in S$.
The set $T$ contains the identifiers of all nodes that are part of the complete network. 
This grammar guarantees that node identifiers of node expressions---the first component of $\nexpr{\dval{id}}{P}{R}$---are unique.

The operational semantics of network expressions is given in \Tab{sos network}.
Internal actions $\tau$ as well as the actions
$\colonact{\dval{id}}{\deliver{\dval{d}}}$ and	
$\colonact{\dval{id}}{\obtain{d}{\!\dval{id}}}$
are interleaved in the parallel composition of
nodes that makes up a network, and then lifted to encapsulated networks (Line 1 of \Tab{sos network}).

\begin{table}[t]
\vspace{-10pt}
\[\begin{array}{@{}c@{}}
\displaystyle
\frac{M \ar{a} M'}{M \| N \ar{a} M' \| N}
\qquad\!\!\!
\frac{N \ar{a} N'}{M \| N \ar{a} M \| N'}
\qquad\!\!\!
\frac{M \ar{a} M'}{[M] \ar{a} [M']}
\qquad\!\!\mbox{\scriptsize
	$
	\left(\!\forall a\!\mathbin\in\!\left\{\begin{array}{@{}l@{}}
	\tau,\colonact{\dval{id}}{\deliver{\dval{d}}},\\
	\colonact{\dval{id}}{\obtain{d}{\dval{id}}},\\
	\end{array}\right\}\!\right)$}
\\[16pt]\displaystyle
  \frac{M \ar{a} M' \quad
        N \ar{a} N'}
  {M \| N \ar{a} M' \| N'}
\qquad\!\!\!
\frac{M \ar{a} M'}{[M] \ar{a} [M']}
\qquad\!\!\mbox{\scriptsize
	$
	\left(\!\forall a\!\mathbin\in\!\left\{\begin{array}{@{}l@{}}
	\connect{\dval{id}}{\dval{id\/}'},\\
	\disconnect{\dval{id}}{\dval{id\/}'}\!
	\end{array}\right\}\!\right)$}
\\[16pt]
\displaystyle
\frac
{M
	\ar{\transmission{\tset{T}_1}{\tset{R}_1}}
	M' \quad
	N
	\ar{\transmission{\tset{T}_2}{\tset{R}_2}}
	N'
}
{\rule[11pt]{0pt}{1pt}
	M \| N
	\ar{\transmission{\tset{T}_1\uplus\tset{T}_2}{\tset{R}_1\uplus\tset{R}_2}}
	M' \| N'
	}
\qquad
\displaystyle
\frac{M \ar{\transmission{\tset{R}}{\tset{R}}} M'}{
	[M] \ar{{\bf tick}} [M']}
\end{array}
\]
\caption{\em Structural operational semantics for network expressions}
	\label{tab:sos network}
	\vspace{-25pt}
\end{table}

\newcommand{\dom}[1]{\texttt{dom}(#1)}
Actions \textbf{traffic} and (\textbf{dis})\textbf{connect} are synchronised. 
The rule for synchronising the action \transmissionID\ (Line 3), the only
action that consumes time on the network layer, uses the union $\uplus$ of partial functions. It is formally defined as
\[
(\tset{R}_1\uplus\tset{R}_2)(id):=
\left\{
\begin{array}{l@{\quad}l}
\conflict&\mbox{if $id\in\dom{\tset{R}_1}\cap\dom{\tset{R}_2}$}\\
\tset{R}_1(id)&\mbox{if $id\in\dom{\tset{R}_1}-\dom{\tset{R}_2}$}\\
\tset{R}_2(id)&\mbox{if $id\in\dom{\tset{R}_2}-\dom{\tset{R}_1}$}\ .
\end{array}
\right.
\]
The synchronisation of the sets $\tset{R}_i$ and $\tset{T}_i$ has the following intuition:
if a node identifier $\dval{id}\in\tIP$ is in both $\dom{\tset{T}_1}$ and
\dom{$\tset{T}_2$} then there exist two nodes that transmit to
node $\dval{id}$ at the same time, and therefore a frame collision occurs. 
In our algebra this is modelled by the special {\chnk} $\conflict$. The sos rules of Tables~\ref{tab:sos node} and \ref{tab:sos network}
guarantee that there cannot be collisions within the set of received {\chnk}s $\tset{R}$.
The reason is that each node merely contributes to $\tset{R}$ a {\chnk} for itself; 
it can be the {\chnk} $\conflict$ though.
Therefore we could have written $\tset{R}_1\cup\tset{R}_2$ instead of $\tset{R}_1\uplus\tset{R}_2$
in the sixth rule of \Tab{sos network}.

The last rule propagates a $\transmission{\tset{T}}{\tset{R}}$-action of a partial network
$M$ to a complete network $[M]$. By then $\tset{T}$ consists of all {\chnk}s (after collision
detection) that are being transmitted by any member in the
network, and $\tset{R}$ consists of all  {\chnk}s that are
received. The condition $\tset{R}=\tset{T}$ determines the content of the messages in $\tset{R}$.
The $\transmission{\tset{T}}{\tset{R}}$-actions become internal at this level,
as they cannot be steered by the outside world; all that is left is a time-step \textbf{tick}.

\subsection{Results on the Process Algebra}

As for the process algebra T-AWN \cite{ESOP16}, but with a slightly simplified proof, one can show
that our processes have no \emph{time deadlocks}:

\begin{theorem}
A complete network $N$ in our process algebra always admits a transition, independently of the outside environment, 
i.e.\ $\forall N, \exists a$ such that $N \ar{a}$ and
$a \not \in \{\textbf{connect}(\dval{id},\dval{id\/}'),\textbf{disconnect}(\dval{id},\dval{id\/}'),\dval{id}\!:\!\newpkt{\dval{d}}{\dval{dest}}\}$.\\
More precisely, either $N \ar{\bf tick}$, or $N \ar{\colonact{\dval{id}}{\deliver{\dval{d}}}}$ or $N \ar{\tau}$.
\end{theorem}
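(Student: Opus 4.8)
The plan is to establish the claim bottom-up through the three syntactic layers, the crux being a dichotomy at the level of sequential processes. \emph{Step~1 (sequential dichotomy).} By induction on the structure of the sequential process expression $P$ I would show that every $\xi$-closed state $\xi,P$ satisfies one of: \emph{(i)} $\xi,P$ admits a $\tau$- or a $\deliver{\dval{d}}$-transition; or \emph{(ii)} $\xi,P$ has a \emph{time step}, i.e.\ for some fixed choice of either ``wait'' or ``transmit one specific chunk'' it can perform that action while receiving an arbitrary $\dval{ch}\in\tCHUNK$. Each case is read straight off \Tab{sos sequential}: an assignment with defined right-hand side (Rule~6), a satisfiable guard (first clause of Rule~12), and a term $\deliver{\keyw{data}}.P$ with $\xi(\keyw{data})$ defined (Rule~5) give \emph{(i)}; a term $\transmit{\dexp{ms}}.P$ with $\xi(\dexp{ms})$ defined is forced by Rules~1/2 to transmit exactly $\chunk{\xi(\dexp{ms})}{\xi(\counter){+}1}$, hence is in \emph{(ii)}; every \emph{unvalued} state can wait via Rule~9, and a guard that cannot be satisfied can wait via the second clause of Rule~12, again \emph{(ii)}; a term $\textbf{newpkt}(\keyw{data},\keyw{dest}).P$ can always wait via Rule~4, so --- apart from the $\textbf{newpkt}$-transition of Rule~3, which the theorem explicitly excludes --- it lies in \emph{(ii)}; for $P_1+P_2$, if either summand is in \emph{(i)} then so is $P_1+P_2$ (Rule~10), and otherwise both summands have a time step, of which $P_1+P_2$ inherits one (Rules~10/11); the recursion case follows from Rules~7/8 and the induction hypothesis on the body, handled exactly as in the corresponding T-AWN proof.

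\emph{Step~2 (lifting to nodes; the easy cases).} Using \Tab{sos node}, a node $\nexpr{\dval{id}}{(\xi,P)}{R}$ whose process is in class \emph{(i)} inherits the matching $\tau$ or $\colonact{\dval{id}}{\deliver{\dval{d}}}$ action; a node whose process is in class \emph{(ii)} admits, for every received chunk $\dval{ch}$, a time step $\transmission{\tset{T}}{\tset{R}}$ with $\tset{T}=\{(r,\chunk{m}{c})\mid r\in R\}$ if it transmits $\chunk{m}{c}$ and $\tset{T}=\emptyset$ otherwise, and with $\tset{R}=\{(\dval{id},\dval{ch})\}$ if $\dval{ch}\neq\medIdle$ and $\tset{R}=\emptyset$ otherwise. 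Now take a complete network $N=[M]$ where $M$ is the parallel composition of node expressions $\nexpr{\dval{id}_k}{(\xi_k,P_k)}{R_k}$, $k=1,\dots,n$. If some $\xi_k,P_k$ is in class \emph{(i)} with a $\tau$, the interleaving rules and the encapsulation rule of \Tab{sos network} give $N\ar{\tau}$; if some $\xi_k,P_k$ is in class \emph{(i)} with a delivery, they give $N\ar{\colonact{\dval{id}}{\deliver{\dval{d}}}}$. In both cases we are done, and the always-available $\textbf{connect}$/$\textbf{disconnect}$ transitions, also excluded by the theorem, are simply never used.

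\emph{Step~3 (the tick case, and the main obstacle).} Otherwise no $\xi_k,P_k$ admits $\tau$ or a delivery, so by Step~1 \emph{every} node is in class \emph{(ii)}. Commit each node to its time step: let $J\subseteq\{1,\dots,n\}$ be the transmitting nodes and, for $k\in J$, let $\chunk{m_k}{c_k}$ be the chunk node $k$ transmits. Define $\tset{T}\colon\tIP\rightharpoonup\tCHUNK$ by $\tset{T}(\dval{id})=\conflict$ if $|\{k\in J\mid\dval{id}\in R_k\}|\geq 2$, $\tset{T}(\dval{id})=\chunk{m_k}{c_k}$ if that set equals $\{k\}$, and $\tset{T}(\dval{id})$ undefined otherwise. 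The network grammar imposes $R_k\subseteq T$ for the common node set $T=\{\dval{id}_1,\dots,\dval{id}_n\}$, so $\dom{\tset{T}}\subseteq T$. Now let every node $\dval{id}_k$ perform its time step while receiving $\dval{ch}_k:=\tset{T}(\dval{id}_k)$ when this is defined and $\dval{ch}_k:=\medIdle$ otherwise; this is allowed since a class-\emph{(ii)} node accepts any incoming chunk. Synchronising these $n$ node steps via Line~3 of \Tab{sos network} gives $M\ar{\transmission{\tset{T}}{\tset{R}}}M'$, where the first component is indeed the $\tset{T}$ defined above (the operator $\uplus$ inserts $\conflict$ exactly at the identifiers reached by two transmitters), and $\tset{R}=\{(\dval{id}_k,\tset{T}(\dval{id}_k))\mid\tset{T}(\dval{id}_k)\ \text{defined}\}$, which equals $\tset{T}$ since $\dom{\tset{T}}\subseteq T=\{\dval{id}_1,\dots,\dval{id}_n\}$ and the per-node contributions to $\tset{R}$ have pairwise disjoint singleton domains, so their $\uplus$-union carries no conflict. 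Hence $\tset{R}=\tset{T}$ and the last rule of \Tab{sos network} yields $N=[M]\ar{\bf tick}[M']$. \textbf{The main obstacle} is getting this synchronisation bookkeeping exactly right: one must check that every identifier transmitted to is really a node of $N$ --- so that it can ``receive back'' the transmitted chunk and thereby force $\tset{R}=\tset{T}$ --- which is precisely where the grammar's side condition $\{\dval{id}\}\cup R\subseteq T$ is needed; the only other delicate point is the recursion case of Step~1.
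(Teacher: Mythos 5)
Your proposal is correct and follows essentially the argument the paper alludes to: the paper gives no proof of its own but refers to the (slightly simplified) T-AWN proof, which is exactly this layered progress argument --- a dichotomy for sequential processes (instantaneous $\tau$/\textbf{deliver} step versus a time step accepting any incoming chunk), lifted through node expressions, and closed off by synchronising all nodes' time steps and checking $\tset{R}=\tset{T}$ via the grammar's side condition $\{\dval{id}\}\cup R\subseteq T$. The only point you defer --- the recursion case of the sequential dichotomy --- is also the only point the paper itself outsources to the T-AWN development, so nothing essential is missing.
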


The following results (statements and proofs) are very similar to the results about the process algebra \awn, as presented in \cite{TR13}.
A rich body of foundational meta theory of process algebra allows the transfer of the results to our
setting, without too much overhead work.

Identical to \awn and its timed version {\sc T-AWN}, 
our process algebra admits a translation into one without data structures
(although we cannot describe the target algebra without using data structures).
The idea is to replace any variable by all
possible values it can take. The target algebra differs from the original only on the level of sequential processes; the subsequent layers are unchanged. The construction closely follows the one given in the appendix of \cite{ESOP16}.
The inductive definition contains the rules 

$	\T_\xi(\deliver{\dexp{data}}.\p)=\deliver{\xi(\dexp{data})}.\T_\xi(\p)$ and

$\T_\xi(\assignment{\keyw{var}:=\dexp{exp}}\p)=\tau.\T_{\multiAssign{\keyw{var}:=\xi(\dexp{exp})}}(\p)$.

\noindent
Most other rules require extra operators that keep track of the passage of time and the
  evolution of other internal variables.
The resulting process algebra has a structural operational semantics in the
(infinitary)
\emph{de Simone} format, generating the same transition system---up to strong
bisimilarity, $\bis$ ---as the original. It follows that $\bis\,$, and many other
semantic equivalences, are congruences on our language \cite{dS85}.

\begin{theorem}\label{thm:1}
Strong bisimilarity is a congruence for all operators of our language.
\end{theorem}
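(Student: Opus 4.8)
The plan is to formalise the argument sketched just above the statement: translate our algebra into the data-free algebra $\T$, observe that the translated semantics is in de Simone format, apply the congruence theorem for that format, and transport the conclusion back. Concretely I would split the work into three steps: (i) define $\T$ and prove an operational correspondence between the two semantics; (ii) verify the de Simone format and invoke \cite{dS85}; (iii) push congruence back along $\T$ to every operator of our language.

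For step~(i), I would define $\T_\xi(P)$ by induction on the structure of $P$, together with the auxiliary operators needed to keep track of the passage of time and of the values of the read-only variables $\now$, $\rfr$ and $\counter$, and extend $\T$ compositionally to node and network expressions; the clauses for \textbf{deliver} and for assignment are the ones displayed in the excerpt, and the remaining clauses follow the construction in the appendix of \cite{ESOP16}, adapted to our new rules. The central lemma is then that $\xi,P \ar{a} \zeta,P'$ holds in the original semantics if and only if $\T_\xi(P) \ar{a} Q$ holds in the target semantics for some $Q$ with $Q \bis \T_\zeta(P')$ (and the construction can be set up so that in fact $Q = \T_\zeta(P')$); the analogous statements for node and network expressions follow at once since those layers are unchanged. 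This lemma is proved by induction on the derivation of transitions, in both directions, and the bulk of the effort goes into faithfully encoding the durational \textbf{transmit} rules, the silent $\w$-transitions — in particular the facts that $P+Q$ may wait only when both $P$ and $Q$ do, and that a $\w$-transition does not unfold a recursive call — and the behaviour of $\now$, $\rfr$ and $\counter$, all without any lookahead in the premises.

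For step~(ii), I would check that every rule of the target semantics has conclusion $f(x_1,\dots,x_n) \ar{a} t$ with premises of the form $x_i \ar{a_i} y_i$ for pairwise distinct $i$, where $t$ is built from the $y_i$ and the untested $x_j$ with each variable used at most once and the premises contain no lookahead, while infinitely many rule instances per operator (one per data value) and infinite sums are allowed; this is exactly the infinitary de Simone format. De Simone's theorem \cite{dS85}, in the version tolerating infinitely many rules and premises, then gives that $\bis$ is a congruence for every operator, hence for every context, of the target algebra. For step~(iii), the operational correspondence of step~(i) shows that $\xi,P \bis \xi,Q$ in the original semantics iff $\T_\xi(P) \bis \T_\xi(Q)$, and since $\T$ is compositional, for each operator $f$ of our language $\T_\xi(f(P_1,\dots,P_k))$ is one fixed target context applied to the translations of $P_1,\dots,P_k$; congruence of $\bis$ for that context therefore yields congruence of $\bis$ for $f$. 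Running this through the prefixes $\alpha.\_$, the choice $\_+\_$, the guards $\cond{\varphi}\_$, the assignments $\assignment{\keyw{var}:=\dexp{exp}}\_$, recursion, the node construction $\nexpr{\dval{id}}{\_}{R}$, the parallel composition $\_\|\_$ and the encapsulation $[\_]$ exhausts the operators of the language.

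The only real obstacle is the operational-correspondence lemma of step~(i): the translation must reproduce the timed transmit behaviour, the waiting behaviour and the evolution of the read-only variables while staying within a lookahead-free format — this is precisely what forces the auxiliary operators and what makes the present theorem more than the one-line remark preceding it. Once that lemma is established, steps~(ii) and~(iii) are routine bookkeeping, exactly as for \awn\ in \cite{TR13} and {\sc T-AWN} in \cite{ESOP16}.
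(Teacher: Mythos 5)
Your proposal is correct and follows essentially the same route as the paper: translating into a data-free algebra whose semantics is in (infinitary) de Simone format, invoking de Simone's congruence theorem, and transporting the result back via the operational correspondence. The paper's own proof is just a terser version of this, deferring the construction of $\T$ and the correspondence lemma to the appendix of \cite{ESOP16}.
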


\noindent
This is a deep result that usually takes many pages to establish (e.g.~\cite{SRS10}).
Here we get it directly from the existing theory on structural
operational semantics, as a result of carefully designing our
language within the disciplined framework described by de Simone~\cite{dS85}.
\qed

\begin{theorem}\label{thm:2}
The operator $\|$ is associative and commutative, up to $\bis$\,.
\end{theorem}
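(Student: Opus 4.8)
The plan is to prove both properties by exhibiting explicit strong bisimulations between network expressions, following the pattern used for the parallel operator of \awn\ in \cite{TR13}. For commutativity I would use the relation $\mathcal{R}$ of all pairs $(M\|N,\,N\|M)$ with $M,N$ partial networks; this relation is automatically symmetric, since the converse of such a pair has the same shape. For associativity I would use the relation $\mathcal{S}$ of all pairs $\bigl((M\|N)\|K,\,M\|(N\|K)\bigr)$ together with their converses. Throughout one restricts to well-formed network expressions, in which the grammar forces the index sets of the composed partial networks to be pairwise disjoint subsets of the global set $T$ --- a requirement that is symmetric in those sets and independent of how they are grouped, hence satisfied by one side of each claimed equation exactly when it is satisfied by the other. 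Showing that $\mathcal{R}$ and $\mathcal{S}$ are strong bisimulations for the transition relation generated by \Tab{sos network} yields $M\|N\bis N\|M$ and $(M\|N)\|K\bis M\|(N\|K)$; congruence (\autoref{thm:1}) then additionally propagates these equivalences to complete networks $[\,\cdot\,]$ and into arbitrary contexts.

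The verification is a case analysis over the rules for $\|$ in \Tab{sos network}, which come in two kinds. The \emph{interleaving} rules --- for $\tau$, $\colonact{\dval{id}}{\deliver{\dval{d}}}$ and $\colonact{\dval{id}}{\obtain{d}{\dval{id}}}$ --- let one operand act while the other is frozen; such a move is matched by the mirror-image rule on the swapped, resp.\ re-grouped, composition, and the target pair again has the required shape (the same node expressions, reassociated or interchanged). The \emph{synchronisation} rules --- for $\textbf{connect}$, $\textbf{disconnect}$ and the time step $\transmission{\tset{T}}{\tset{R}}$ --- require \emph{every} operand to participate. For $\textbf{connect}$ and $\textbf{disconnect}$ all operands carry literally the same label, so the match is obtained by reusing the same operand transitions. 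For the time step one first observes that a $\transmission{\tset{T}}{\tset{R}}$-transition of a parallel composition can arise only from the synchronisation rule --- at network level there are no residual process-level actions $\transmitAct{m}{c}{\dval{ch}}$ or $\w(\dval{ch})$, these having already been absorbed into transmission labels at node level in \Tab{sos node} --- so the transition necessarily factors into a $\transmission{\tset{T}_i}{\tset{R}_i}$-step of each operand; reusing those operand steps in the other grouping gives a transition of the other side whose label agrees with the original one modulo reassociation of the partial-function union $\uplus$.

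The crux is therefore a single elementary lemma: on partial functions $\tIP\rightharpoonup\tCHUNK$, the operation $\uplus$ is commutative and associative. This follows from a case distinction, for a fixed identifier, on how many of the combined functions are defined there --- zero, one, two, or three --- using only that the ``conflict'' clause fires as soon as two contributions meet and that $\conflict$ is then absorbing; the same argument covers the $\tset{T}$-components, and for the $\tset{R}$-components it is even easier, as no conflicts can occur there and $\uplus$ there coincides with ordinary union (as already observed when $\uplus$ was defined). Feeding this lemma into the synchronisation case shows that matched transitions again land in $\mathcal{R}$, resp.\ $\mathcal{S}$, which completes the proof. I expect the only real work to be this bookkeeping --- confirming that every time step of a parallel composition factors through operand time steps, and then tracking the $\uplus$-arithmetic across the two groupings --- rather than anything conceptually deep: the synchronisation pattern of $\|$ is uniform, with no side conditions to juggle of the kind that arise for operators with hiding or asymmetric communication.
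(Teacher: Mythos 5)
Your proposal is correct, but it takes a genuinely different route from the paper. You construct explicit strong bisimulations $\mathcal{R}$ and $\mathcal{S}$ and discharge them by a case analysis over the rules for $\|$, reducing everything to the commutativity and associativity of $\uplus$ on partial functions $\tIP\rightharpoonup\tCHUNK$; your key observations---that a $\transmission{\tset{T}}{\tset{R}}$-labelled transition of a composition can only arise from the synchronisation rule and hence factors through the operands, and that the $\uplus$-arithmetic is settled by counting how many operands are defined at a given identifier, with $\conflict$ absorbing---are exactly the points on which such a direct proof turns. The paper instead invokes the ASSOC-de Simone rule format of \cite{CMR08}: it classifies the rules for $\|$ as types $1$, $2$ and $7$, checks the (mostly trivially satisfied) format conditions, and reduces the theorem to associativity of the communication function $\gamma$, which on \textbf{traffic} actions is essentially your $\uplus$-lemma; commutativity then follows from the symmetry of the rules. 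So both arguments bottom out in the same elementary fact about $\uplus$, but the meta-theoretic route replaces your transition-factorisation bookkeeping with a one-time verification of format conditions, which is why the paper's proof is so short. Your approach buys self-containedness and makes the bisimulations explicit; the paper's buys brevity and robustness under changes to the operator, at the price of importing the machinery of \cite{CMR08}. One small remark: since \autoref{thm:2} is stated for the non-probabilistic subalgebra, your LTS-level bisimulations suffice as given; to cover \all\ one would additionally have to lift $\mathcal{R}$ and $\mathcal{S}$ to distributions over network expressions, which is routine but worth flagging.
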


\begin{proof}
The operational rules for this operator fits a format presented in \cite{CMR08},
guaranteeing associativity up to~$\bis$.
The {\emph ASSOC-de Simone format} of \cite{CMR08} applies to all 
transition system specifications (TSSs) in de
Simone format, and allows $7$ different types of rules (named $1$--$7$) for the operators in question.
Our TSS is in de Simone format; the four rules for $\|$ of
\Tab{sos network} are of types $1$, $2$ and $7$, respectively.
To be precise, it has rules $1_a$ and $2_a$ for $a\in\{\tau$,
$\colonact{\dval{id}}{\deliver{\dval{d}}}$,
$\colonact{\dval{id}}{\obtain{d}{\dval{dest}}}\}$, rules $7_{(a,b)}$ for
\[(a,b)\mathbin\in\{(\transmission{\tset{T}_1}{\tset{R}_1},\transmission{\tset{T}_2}{\tset{R}_2})\mid \tset{R}_1,\tset{R}_2,\tset{T}_1,\tset{T}_2\mathop\in\tIP\rightharpoonup\tCHUNK\}\]
and rules $7_{(c,c)}$ for $c\in\{\connect{\dval{id}}{\dval{id\/}'},\disconnect{\dval{id}}{\dval{id\/}'} \mid \dval{id},\dval{id\/}'\in\tIP\}$.
Moreover, the partial \emph{communication function}
$\gamma:\act\times\act\rightharpoonup \act$ is given by
$\gamma(\transmission{\tset{T}_1}{\tset{R}_1},\transmission{\tset{T}_2}{\tset{R}_2})=\transmission{\tset{T}_1\uplus\tset{T}_2}{\tset{R}_1\uplus\tset{R}_2}$
and $\gamma(c,c)=c$.
The main result of \cite{CMR08} is that an operator is guaranteed to
be associative, provided that $\gamma$ is associative and six
conditions are fulfilled. In the absence of rules of types 3, 4, 5
and 6, five of these conditions are trivially fulfilled, and the
remaining one reduces to\\[1mm]
\centerline{$
7_{(a,b)} \Rightarrow (1_a \Leftrightarrow 2_b)
           \wedge (2_a \Leftrightarrow 2_{\gamma(a,b)})
           \wedge (1_b \Leftrightarrow 1_{\gamma(a,b)})\ .$}\\[2mm]
Here $1_a$ says that rule $1_a$ is present, etc.
This condition is trivially met for $\|$ as
there neither exists a rule of the form $1_{\transmission{\tset{T}\!}{\tset{R}}}$ nor of the form 
$2_{\transmission{\tset{T}\!}{\tset{R}}}$, or $1_c$, $2_c$ with $c$ as above. As on
\textbf{traffic} actions~$\gamma$ is basically the union of partial functions
($\uplus$), where a collision in domains is indicated by an error $\conflict$, it is
straightforward to prove associativity of~$\gamma$.

Commutativity of $\|$ follows by symmetry of the sos rules.
\qed
\end{proof}

\section{An Algebra for Link Layer Protocols}\label{sec:pa_2}
We now introduce \all, the \emph{Algebra for Link Layer protocols}. It is obtained from the process
algebra presented in the previous section by the addition of a probabilistic choice operator $\bigoplus_{0}^n$.
As a consequence, the semantics of the algebra is no longer a labelled transition system, but a 
\emph{probabilistic labelled transition system} (pLTS)~\cite{DGHMZ07}.
This is a triple $(S, \act, \rightarrow)$, where
\begin{enumerate}[(i)]
	\item $S$ is a set of states
	\item $\act$ is a set of actions
	\item ${\rightarrow} \subseteq S \times \act \times \dist{S}$, where $\dist{S}$ is the set
          of all (discrete) probability distributions over $S$: functions
          $\Delta:S\rightarrow[0,1]$ with $\sum_{s\in S}\Delta(s)=1$.
\end{enumerate}
As with LTSs, we usually write $s \ar{\alpha} \Delta$ instead of $(s,\alpha,\Delta) \in {\rightarrow}$.
The \emph{point distribution} $\delta_s$, for $s\in S$, is the distribution with $\delta_s(s)=1$.
We simply write \plat{$s \ar{\alpha} t$} for \plat{$s \ar{\alpha} \delta_t$}.
An LTS may be viewed as a degenerate pLTS, in which only point distributions occur.
For a uniform distribution over $s_0,\dots,s_n\in S$ we write $\unidist_{i=0}^n s_i$.
The pLTS associated to \all takes $S$ to be the disjoint union of the pairs $\xi,\p$, with $\p$ a
sequential process expression, and the network expressions. $\act$ is the collection of transition
labels, and $\rightarrow$ consists of the transitions derivable from the structural operational
semantics of the language.

Rules (1)--(6), (9), (11) and (12) of \Tab{sos sequential} are adopted to \all unchanged,
whereas in Rules (7), (8) and (10) the state $\zeta,\p'$ (or
$\zeta,\q'$) is replaced by an arbitrary distribution $\Delta$.
Add to those the following rule for the probabilistic choice operator:
\vspace{-1.5mm}
\[\begin{array}{@{}r@{~\hspace{-.5pt}}c@{~\hspace{-.5pt}}l@{\quad}r}
\displaystyle
	\xi,\PC{\keyw{i}}{0}{n}\p\,
	\ar{\tau}\
	\unidist_{i=0}^{\xi(\dval{n})}~\multiAssign{\keyw{i}:=i},\p
\end{array}\]
Here the data variable $i$ may occur in $P$. 
The rules of Tables \ref{tab:sos node} and \ref{tab:sos network} are
adapted to \all unchanged, except that $P'$, $M'$ and $N'$ are now replaced by arbitrary distributions over
sequential processes and network expressions, respectively. Here we adapt the convention that a
unary or binary operation on states lifts to distributions in the standard manner.
For example, if $\Delta$ is a distribution over sequential processes, $\dval{id}\in\tIP$ and
$R\subseteq\tIP$, then $\nexpr{\dval{id}}{\Delta}{R}$ describes the distribution over node
expressions that only has probability mass on nodes with address $\dval{id}$ and range $R$, and for
which the probability of $\nexpr{\dval{id}}{P}{R}$ is $\Delta(P)$.
Likewise, if $\Delta$ and $\Theta$ are distributions over network expressions, then
$\Delta\|\Theta$ is the distribution over network expressions of the form $M\|N$,
where $(\Delta\|\Theta)(M\|N) = \Delta(M)\cdot\Theta(N)$.

\section{\hspace{-.5pt}Formalising Liveness Properties of Link Layer Protocols}\label{sec:liveness}

Link layer protocols communicate with the \application layer through the actions
$\colonact{\dval{id}}{\obtain{d}{\dval{dest}}}$ and
$\colonact{\dval{id}}{\deliver{\dval{d}}}$.
The typical liveness property expected of a link layer protocol is that if 
the \application layer at node $\dval{id}$ injects
a data packet $d$ for delivery at destination $\dval{dest}$
then this packet is delivered eventually. In terms of our process algebra, this says that every
execution of the action $\colonact{\dval{id}}{\obtain{d}{\dval{dest}}}$ ought to be followed by 
the action $\colonact{\dval{dest}}{\deliver{\dval{d}}}$.
This property can be formalised in Linear-time Temporal Logic \cite{Pn77} as
\begin{equation}\label{L1}
  {\bf G}\big(\colonact{\dval{id}}{\obtain{d}{\dval{dest}}} \Rightarrow
  {\bf F}(\colonact{\dval{dest}}{\deliver{\dval{d}}}) \big)
\end{equation}
for any $\dval{id},\dval{dest}\mathbin\in\tIP$ and $d\mathbin\in\tDATA$.
This formula has the shape \plat{$\mathbf{G} \big(\phi^{\it pre} \Rightarrow \mathbf{F}\phi^{\it post}\big)$},
and is called an \emph{eventuality property} in \cite{Pn77}. It says that whenever we reach a state
in which the precondition $\phi^{\it pre}$ is satisfied, this state will surely be followed by a
state were the postcondition $\phi^{\it post}$ holds. In \cite{DV95,TR13} it is explained how
action occurrences can be seen or encoded as state-based conditions.
Here we will not define how to interpret general LTL-formula in pLTSs,
but below we do this for eventuality properties with specific choices of $\phi^{\it pre}$ and  $\phi^{\it post}$.

Formula (\ref{L1}) is too strong and does not hold in general: in case the nodes $\dval{id}$ and
$\dval{dest}$ are not within transmission range of each other, the delivery of messages from
$\dval{id}$ to $\dval{dest}$ is doomed to fail. We need to postulate two side conditions to make this
liveness property plausible. Firstly, when the request to deliver the message comes in, 
$\dval{id}$ needs to be connected to $\dval{dest}$.\pagebreak[3]
We introduce the predicate $\textbf{cntd}(\dval{id},\dval{dest})$ to express this, and hence take
$\phi^{\it pre}$ to be $\textbf{cntd}(\dval{id},\dval{dest}) \wedge \colonact{\dval{id}}{\obtain{d}{\dval{dest}}}$.
Secondly, we assume that the link between $\dval{id}$ and $\dval{dest}$ does not break until the message is delivered.
As remarked in \cite{TR13}, such a side condition can be formalised by taking $\phi^{\it post}$ to be
$\colonact{\dval{dest}}{\deliver{\dval{d}}} \vee \textbf{disconnect}(\dval{id},\dval{dest})$.
Thus the liveness property we are after is
\begin{equation}\label{L2}
\begin{array}{c}
  {\bf G}\big(\textbf{cntd}(\dval{id},\dval{dest}) \wedge
  \colonact{\dval{id}}{\obtain{d}{\dval{dest}}} \Rightarrow \\
  {\bf F}(\colonact{\dval{dest}}{\deliver{\dval{d}}} \vee \textbf{disconnect}(\dval{id},\dval{dest}) \vee \textbf{disconnect}(\dval{dest},\dval{id})) \big)
\end{array}
\end{equation}

We now define the validity of eventuality properties
\plat{$\mathbf{G} \big(\phi^{\it pre} \Rightarrow \mathbf{F}\phi^{\it post}\big)$}. Here $\phi^{\it pre}$ and $\phi^{\it post}$ denote sets of transitions and actions, respectively,
and hold if one of the transitions or actions in the set occurs. In (\ref{L2}), 
$\phi^{\it pre}$ denotes the transitions with label
$\colonact{\dval{id}}{\obtain{d}{\dval{dest}}}$
that occur when the side condition $\textbf{cntd}(\dval{id},\dval{dest})$ is met,
whereas
$\phi^{\it post} = \{\colonact{\dval{dest}}{\deliver{\dval{d}}},\textbf{disconnect}(\dval{id},\dval{dest}),\linebreak[2]\textbf{disconnect}(\dval{dest},\dval{id})\}$ 
is a set of actions.
\pagebreak[3]

A \emph{path} in a pLTS $(S,\act,\rightarrow)$ is an alternating sequence
$s_0,\alpha_1,s_1,\alpha_2,\dots$ of states and actions, starting with a state and either being infinite
or ending with a state, such that there is a transition $s_i\ar{\alpha_{i+1}}\Delta_{i+1}$ with
$\Delta_{i+1}(s_{i+1})>0$ for each $i$.\linebreak[3] The path is \emph{rooted} if it starts with a
state marked as `initial', and \emph{complete} if either it is infinite, or there is no transition 
starting from its last state.
A state or transition is \emph{reachable} if it occurs in a rooted path.

In a pLTS with an initial state, an eventually formula
\plat{$\mathbf{G} \big(\phi^{\it pre} \Rightarrow \mathbf{F}\phi^{\it post}\big)$},
with $\phi^{\it pre}$ and $\phi^{\it post}$ denoting sets of transitions and actions, \emph{holds outright}
if all complete paths starting with a reachable transition from $\phi^{\it pre}$
contain a transition with a label from $\phi^{\it post}$.

Definitions 3 and 5 in \cite{DGMZ07} define the set of probabilities that a pLTS with an initial
state will ever execute the action $\omega$. One obtains a set of probabilities rather than a single
probability due to the possibility of nondeterministic choice. This definition generalises to \emph{sets} of
actions $\phi^{\it post}$ (seen as disjunctions) by first renaming all actions in such a set into $\omega$.
It also generalises trivially to pLTSs with an \emph{initial transition}.
For $t$ a transition in a pLTS, let $\mathit{Prob}(t,\phi^{\it post})$ be the infimum of the set
of probabilities that the pLTS in which $t$ is taken to be the initial transition will ever execute $\phi^{\it post}$.
Now in a pLTS with an initial state, an eventually formula
\plat{$\mathbf{G} \big(\phi^{\it pre} \Rightarrow \mathbf{F}\phi^{\it post}\big)$}
\emph{holds with probability at least $p$} if for all reachable transitions $t$ in
$\phi^{\it pre}$ we have $\mathit{Prob}(t,\phi^{\it post})\geq p$.

Possible correctness criteria for link layer protocols are that the liveness property (\ref{L2})
either holds outright, holds with probability 1, or at least holds with probability $p$ for a
sufficiently high value of $p$.

Sometimes we are content to establish that (\ref{L2}) holds under the additional assumptions that the
network is stable until our packet is delivered, meaning that no links between any nodes are broken
or established, and/or that the \application layer refrains from injecting more packets.
This is modelled by taking
\vspace{-1mm}
\begin{equation}\label{L3post}
\hspace{-2pt}\phi^{\it post} = \{\begin{array}{@{}l@{}}\colonact{\dval{dest}}{\deliver{\dval{d}}},
                                       \textbf{disconnect}(*,*),
                                       \textbf{connect}(*,*),
                                       \obtain{*}{*}
\end{array}\}.
\end{equation}
We will refer to this version of (\ref{L2}) as the \emph{weak packet delivery} property.
\emph{Packet delivery} is the strengthening without $\textbf{newpkt}(*,*)$  in (\ref{L3post}),
i.e.\ not assuming that the \application layer refrains from injecting more packets.

\section{Modelling and Analysing the CSMA/CA Protocol}\label{sec:csma}

In this section we model two versions of the CSMA/CA protocol, using the process algebra \all.
Moreover, we briefly discuss some results we obtained while analysing these protocols. 

The \emph{Carrier-Sense Multiple Access} (CSMA) protocol is a media access control (MAC) protocol in which a node verifies the absence of other traffic before transmitting on a shared transmission medium.
If a carrier is sensed, the node waits for the transmission in progress to end before initiating its own transmission.
Using CSMA, multiple nodes may, in turn, send and receive on the same medium.
 Transmissions by one node are generally received by all other nodes connected to the medium.

The CSMA protocol with Collision Avoidance (CSMA/CA)~\cite{IEEE80211,IEEE80215}%
\footnote{The primary medium access control (MAC) technique of IEEE 802.11~\cite{IEEE80211} is called \emph{distributed coordination function} (DCF), which is a CSMA/CA protocol.}  improves the performance of CSMA. 
If the transmission medium is sensed busy before transmission then the transmission is deferred for a \emph{random} time interval. This interval reduces the likelihood that two or more nodes waiting to transmit will simultaneously begin transmission upon termination of the detected transmission. 
CSMA/CA is used, for example, in Wi-Fi.

It is well known that CSMA/CA suffers from the \emph{hidden station problem} (see \autoref{sec:Hidden}).
To overcome this problem, CSMA/CA is often supplemented by the 
request-to-send/clear-to-send (\rts/\cts) handshaking \cite{IEEE80211}.
This mechanism is known as the IEEE 802.11 \rts/\cts exchange, or \emph{virtual carrier
  sensing}. While this extension reduces the amount of collisions, wireless 802.11 implementations
do not typically implement \rts/\cts for all transmissions because the transmission overhead is too great for small data transfers. 

We use the process algebra \all to model both the CSMA/CA without and with virtual carrier sensing.

\subsection{A Formal Model for CSMA/CA}\label{sec:csma-default}
Our formal specification of CSMA/CA consists of four short processes written in \all. 
It is precise and free of ambiguities---one of the many advantages formal methods provide, 
in contrast to specifications written in English prose.

The syntax of \all is intended to look like pseudo code, and it is our belief that the
specification can easily be read and understood by software engineers, who may or may not have
experience with process algebra.

As the underlying data structure of our model is straightforward, we do not present it explicitly, but introduce it while describing the different processes.

The basic process \CSMA, depicted in \Proc{p_csma}, is the protocol's entry point.

  \vspace*{-2.65ex}
  \algsetup{linenodelimiter=.,linenosize=\tiny}
  \begin{algorithm}[H]
    {\footnotesize
      \caption{\small The Basic Routine}
      \label{pro:p_csma}
      \begin{algorithmic}[1]
\DEFPROCESS{\CSMA}{\myip}
  \STATE\obtain{\data}{\dest}.
  \procInit{\myip}{0}{\dataframe{\data}{\myip}{\dest}}         		 						\label{pro:csma:line1}
  \STATE$+$ 
    \cond{$\new(\dataframe{\data}{\source}{\myip})$}\ \deliver{\data}\ .	\label{pro:csma:line2}
    \IFempty
      \PAR																										\label{pro:csma:line3}
	\UPD{\timeout:=\now+\sifs}			        														\label{pro:csma:line4}
	\cond{$\now \geq \timeout$}			        														\label{pro:csma:line5}
	\STATE\transmit{\ackframe{\source}}\ .\ \procCSMA{\myip}						\label{pro:csma:line6}
      \ENDPAR
    \ENDIFii

		\end{algorithmic}
    }
  \end{algorithm}
        \vspace{-2.65ex}

\noindent
This process maintains a single data variable {\myip} in which it stores its own identity.
It waits until either it receives a request from the \application layer to transmit a packet $\data$
to destination $\dest$, or it receives from another node in the network a CSMA
message (data frame) destined for itself.

In case of a newly injected data packet (\Line{csma}{1}), the process $\INIT$ is called;
this process (described below) initiates the sending of the message via the medium. 
When passing the message on to $\INIT$ we use a function $\dataframeID\mathop:\tDATA\times\tIP\times\tIP\rightarrow\tMSG$
that generates a message in a format used by the protocol: next to the header fields (from which we abstract) it contains the injected \data\ as well as the designated receiver $\dest$ and the sender $\myip$---the current node.

In case of  an incoming $\dataframeID$ destined for this node (the third argument carrying the
destination is $\myip$) (\Line{csma}{2})---any other incoming message is ignored by this process---%
the \data\ is handed over to the \application layer ($\deliver{\data}$) followed
by the transmission of an acknowledgement back to the sender of the message ($\source$).
CSMA/CA requires a short period of idling medium before sending the acknowledgement: in
\cite{IEEE80211} this interval is called \emph{short interframe space} (\sifs).
The process waits until the time of the interframe spacing has passed, and then
transmits the acknowledgement.
The acknowledgement sent is not always received by $\source$, e.g.\ due to data collision;
therefore \source\ could send the same message again (see \Proc{p_ackrecv}) 
and \myip\ could deliver the same data to the \application layer  again.


%
  \vspace*{-2.65ex}
  \algsetup{linenodelimiter=.,linenosize=\tiny}
  \begin{algorithm}[H]
    {\footnotesize
      \caption{\small Protocol Initialisation}
      \label{pro:p_init}
      \begin{algorithmic}[1]
\DEFPROCESS{\INIT}{\myip\comma\backoffexp\comma\frmvar}
	\STATE\cond{$\backoffexp \leq \maxRetransmit$}\label{pro:init:line1}%
	\IFempty
		\UPD{\cw := \cwmin \times 2^\backoffexp}																		\label{pro:init:line2}
		\STATE\PC{\backoff}{0}{\cw{-}1}\procCCA{\myip}{\backoff}{\backoffexp}{\frmvar}\COMMENT{choose a backoff from $\{0,\dots,\cw{-}1\}$}\label{pro:init:line3}
	\ENDIFii
	\STATE$+$ \cond{$\backoffexp > \maxRetransmit$}\label{pro:init:line4}%
	\IFempty
		\deliverL{\returnCAF}\ .\ \procCSMA{\myip}																	\label{pro:init:line5}
  \ENDIFii
		\end{algorithmic}
    }
  \end{algorithm}
        \vspace{-2.65ex}

The process $\INIT$ (\Proc{p_init}) initiates the sending of a message via the medium. 
Next to the variable {\myip}, which is maintained by all processes, it maintains 
the variable $\backoffexp$ and $\frmvar$:
$\backoffexp$ stores the number of attempts already made to send message $\frmvar$.
When the process is called the first time for a message $\frmvar$ (\Line{csma}{1} of \Proc{p_csma}) the 
value of $\backoffexp$ is $0$.

The constant $\maxRetransmit$ specifies the maximum number of attempts the protocol is allowed to retransmit the same message.
If the limit is not yet reached (\Line{init}{1})
the message $\frmvar$ is sent.
As mentioned above, CSMA/CA defers messages for a \emph{random} time interval to avoid collision. 
The node must start transmission within the contention window $\cw$, a.k.a.\ backoff time.
\cw\ is calculated in \Line{init}{2}; it increases exponentially.%
\footnote{A typical value for $\cwmin$ is $16$; it must satisfy $\cwmin>0$.}
After $\cw$ is determined, the process \CCA\ is called, which performs the actual
\textbf{transmit}-action. In case the maximum number of retransmits is reached (\Line{init}{4}), the
process notifies the \application layer and restarts the protocol, awaiting new instructions
  from the application layer, or a new incoming message.

\Proc{p_cca}  takes care of the actual transmission of $\frmvar$. However, the protocol has a complicated procedure when to send this message.
\pagebreak[3]

  \vspace*{-2.65ex}
  \algsetup{linenodelimiter=.,linenosize=\tiny}
  \begin{algorithm}[H]
    {\footnotesize
      \caption{\small Clear Channel Assessment With Physical Carrier Sense}
      \label{pro:p_cca}
      \begin{algorithmic}[1]
\DEFPROCESS{\CCA}{\myip\comma\backoff\comma\backoffexp\comma\frmvar}
  \STATE\cond{$\new(\dataframe{\data}{\source}{\myip})$}\ \deliver{\data}\ .	\label{pro:cca:line1}
  \IFempty
      \PAR							\label{pro:cca:line2}
	\UPD{\timeout:=\now+\sifs}			        \label{pro:cca:line3}
	\cond{$\now \geq \timeout$}			        \label{pro:cca:line3b}
 	\STATE\transmit{\ackframe{\source}}\ .\ \procCCA{\myip}{\backoff}{\backoffexp}{\frmvar}\label{pro:cca:line4}
      \ENDPAR
	\ENDIFii
	\STATE$+$ \cond{$\idle$}\label{pro:cca:line6}%
	\IFempty
      \UPD{$\timeout:=\now+\difs$}\COMMENT{start wait for duration $\difs$}				\label{pro:cca:line7}
      \PAR
	\STATE\cond{$\neg\idle$}\ \procCCA{\myip}{\backoff}{\backoffexp}{\frmvar}	\label{pro:cca:line9}
	\STATE$+$ \cond{$\idle \wedge \now \geq \timeout$}				\label{pro:cca:line10}
		\IFempty
		\UPD{\timeout:=\now+\backoff}							\label{pro:cca:line11}
		\PAR
		  \STATE\cond{$\neg\idle$}\COMMENT{busy during backoff time}\label{pro:cca:line13}%
		  \IFempty
		    \UPD{\backoff:=\timeout-\now}\ \procCCA{\myip}{\backoff}{\backoffexp}{\frmvar}	\label{pro:cca:line14}
			\ENDIFii
	\STATE$+$ \cond{$\idle \wedge \now \geq \timeout$}\COMMENT{idle for backoff time}\label{pro:cca:line15}%
	\IFempty
		    \STATE\transmit{\frmvar}\ .								\label{pro:cca:line16}
		    \procAckRecvL{\myip}{\backoffexp}{\now{+}\maxAckWait}{\frmvar}			\label{pro:cca:line17}
		  \ENDIFii
		\ENDPAR
		\ENDIFii
      \ENDPAR
  \ENDIFii

		\end{algorithmic}
    }
  \end{algorithm}
        \vspace{-2.65ex}

First, the process senses the medium and awaits the point in time when it is idle (\Line{cca}{6}). 
In case, before this happens, it receives from another node in the network a CSMA
message destined for itself (\Line{cca}{1}), this message is handled just as in  \Proc{p_csma},
except that after acknowledging this message the protocol returns to \Proc{p_cca}.

To guarantee a gap between messages sent via the medium, CSMA/CA (as well as other protocols)
specifies the \emph{distributed (coordination function) interframe space} ($\difs\in\tTIME$), which is usually small,%
\footnote{Recommended values for the constant $\difs$ are given in \cite{IEEE80211}.}
but larger than $\sifs$, so that acknowledgements get priority over new data frames.
When the medium becomes busy during the interframe space, another node started transmitting and 
the process goes back to listening to the medium (\Line{cca}{9}). 
In case nothing happens on the medium and the end of the interframe space is reached (\Line{cca}{10}),
the process determines the actual time to start transmitting the message, taking the backoff time $\backoff$  into account (\Line{cca}{11}).  
If the medium is idle for the entire backoff period (\Line{cca}{15}), the message is transmitted  (\Line{cca}{16}),
and the process calls the process $\ACKRECV$ that will await an acknowledgement from the recipient of $\frmvar$ (\Line{cca}{17}); 
the third argument specifies the maximum time the process should wait for such an acknowledgement. 
(As mentioned before an acknowledgement may never arrive.) 
If another 
node transmits on the medium during the backoff period, the protocol restarts the routine (Lines \ref{pro:cca:line13} and \ref{pro:cca:line14}), with an adjusted backoff value $\backoff$---the process already started waiting and should not be punished when the waiting is restarted; this update guarantees fairness of the protocol.

The process awaiting an acknowledgement (\Proc{p_ackrecv}) is straightforward.
It waits until either it receives
a CSMA message destined for itself (\Line{ackrev}{1}), or it receives an acknowledgement (\Line{ackrev}{6}),
or it has waited for this acknowledgement as long as it is going to (\Line{ackrev}{10}).

In the first case, the message is handled just as in  \Proc{p_csma},
except that after acknowledging this message the protocol returns to \Proc{p_ackrecv}.
In the second case the \application\ layer is informed that the
sending of $\frmvar$ was successful and the process loops back to \Proc{p_csma}
(\Line{ackrev}{8}). \Line{ackrev}{10} describes the situation where no acknowledgement message
arrives and the process times out. Here CSMA/CA retries to send the message; the counter
$\backoffexp$ is incremented.

  \vspace*{-2.65ex}
  \algsetup{linenodelimiter=.,linenosize=\tiny}
  \begin{algorithm}[H]
    {\footnotesize
      \caption{\small Receiving an ACK}
      \label{pro:p_ackrecv}
      \begin{algorithmic}[1]
\DEFPROCESS{\ACKRECV}{\myip\comma\backoffexp\comma\acktimeout\comma\frmvar}
  \STATE\cond{$\new(\dataframe{\data}{\source}{\myip})$}\ \deliver{\data}\ .	\label{pro:ackrev:line1}
  \IFempty
      \PAR                                                              \label{pro:ackrev:line2}
	\UPD{\timeout:=\now+\sifs}			                \label{pro:ackrev:line3}
	\cond{$\now \geq \timeout$}			                \label{pro:ackrev:line3b}
 	\STATE\transmit{\ackframe{\source}}\ .\ \procAckRecv{\myip}{\backoffexp}{\acktimeout}{\frmvar}    \label{pro:ackrev:line4}
      \ENDPAR
      \ENDIFii
    \STATE$+$ \cond{$\new(\ackframe{\myip})$}\COMMENT{acknowledgement received}\label{pro:ackrev:line6}%
    \IFempty
			\STATE\deliver{\returnSuccess}\ .\ \procCSMA{\myip}			\label{pro:ackrev:line8}
  \ENDIFii   
  \STATE$+$ \cond{$\now \geq \acktimeout$}\ \procInit{\myip}{\backoffexp{+}1}{\frmvar}		\label{pro:ackrev:line10}

		\end{algorithmic}
    }
  \end{algorithm}
        \vspace{-2.65ex}

\subsection{The Hidden Station Problem}\label{sec:Hidden}
As mentioned in the introduction to this section, CSMA/CA suffers from the hidden station problem.
This refers to the situation where two nodes $A$ and $C$ are not within transmission range of each
other, while a node $B$ is in range of both. 
In this situation $C$ may be transmitting to $B$, but $A$ is not able to sense this, and thus may
start a transmission to $B$ at roughly the same time, leading to data collisions at $B$.

While CSMA/CA is not able to avoid such collisions as a whole---it is always possible that two (or more)
nodes hidden from each other happen to (randomly) choose the same backoff time to send messages%
---it is the exponential growth of the backoff slots that makes the problem
less pressing in the long run, as the following theorem shows.

\begin{theorem}\rm
  If $\maxRetransmit{=}\infty$
  then weak packet delivery holds with probability 1.
\vspace{-1ex}
\end{theorem}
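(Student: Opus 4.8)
The plan is to show that, under $\maxRetransmit=\infty$, a sender $A$ that receives a $\colonact{A}{\obtain{d}{B}}$ while connected to $B$ will almost surely succeed in getting an acknowledgement back, provided the network stays stable and no new packets are injected (the side conditions of weak packet delivery). The proof proceeds by analysing a single "attempt'' — one pass through $\INIT \to \CCA \to \ACKRECV$ — and arguing that (i) every attempt terminates (in finitely many time steps) in either a successful delivery-plus-acknowledgement or a timeout that increments $\backoffexp$, and (ii) each attempt succeeds with probability bounded below by some $q_k>0$ depending on the attempt number $k$, where the backoff window in attempt $k$ has size roughly $\cwmin\cdot 2^k$. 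The key point is that success in attempt $k$ happens whenever $A$ picks a backoff slot in which the medium around $B$ stays idle long enough for the frame and the returning ACK — and since there are only finitely many other nodes, each running the same protocol, the set of "bad'' slots that could collide with $A$ is bounded independently of $k$, while the window grows without bound. Hence $q_k \to 1$, and in particular $\sum_k \log(1-q_k)$... wait — the cleaner argument is: since $q_k\ge 1/2$ for all $k$ beyond some $k_0$ (once the window exceeds twice the number of possible interferers times the frame length), the probability of failing infinitely many consecutive attempts is $\prod_{k\ge k_0}(1-q_k)=0$. So with probability $1$ some attempt succeeds, which yields the $\colonact{B}{\deliver{d}}$ required by $\phi^{\it post}$.

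First I would make precise the notion of an "attempt'' and prove it is a well-defined finite phase: invoking \autoref{thm:2} (no time deadlocks) gives that the network always makes progress, and a structural induction on the CSMA/CA processes — using that $\difs$, $\sifs$, the contention windows, and the ACK-timeout $\durAck$ are all finite — shows that between two successive increments of $\backoffexp$ only boundedly many $\textbf{tick}$s occur. Next I would set up the probabilistic estimate: condition on the scheduler (an adversary resolving all nondeterminism, as in the $\mathit{Prob}(t,\phi^{\it post})$ definition of the excerpt), fix attempt $k$ with backoff window $W_k=\cwmin\cdot 2^{\,k}$ (capped appropriately if the model caps it — here it does not, since retransmits are unbounded, so $W_k\to\infty$), and observe that $A$'s backoff slot is chosen uniformly via the $\bigoplus$ rule. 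Count the slots that are "unsafe'': a slot is unsafe only if some node within range of $B$ either is already transmitting or will be steered by the adversary into transmitting so as to overlap $A$'s frame or $B$'s ACK. Because there are only finitely many nodes, say $n$, and each contributes overlap in at most a bounded number $c$ of $A$'s slots (proportional to the maximal frame length plus $\durAck$ plus $\difs$), at most $nc$ slots are unsafe, so $A$ picks a safe slot with probability $\ge 1 - nc/W_k$. On a safe slot, the medium at $B$ is idle throughout $A$'s transmission and the subsequent $\sifs$, so $B$ executes $\deliver{d}$ and sends the ACK, and the ACK reaches $A$ within $\durAck$; thus the attempt succeeds.

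Combining, for $k$ large enough that $W_k > 2nc$ we get success probability $\ge 1/2$ per attempt, uniformly in the adversary's choices, and the attempts' outcomes are conditionally independent given the past, so the probability that attempt $k_0,k_0+1,\dots,k_0+m$ all fail is $\le 2^{-(m+1)}\to 0$; hence $\mathit{Prob}(t,\phi^{\it post})=1$ for every reachable $t$ in $\phi^{\it pre}$, which is exactly what weak packet delivery with probability $1$ asserts. I expect the main obstacle to be the bookkeeping in the "at most $nc$ unsafe slots'' step: one has to rule out pathological adversary strategies in which a single interfering node repeatedly jams $A$ across many different attempts in a way that is correlated with $A$'s own random choices — but since $A$'s backoff is drawn afresh (and uniformly) at the start of each attempt, and the adversary must commit its interference for a slot before learning $A$'s draw affects that slot, the per-attempt bound $nc/W_k$ genuinely holds regardless of history, and a careful statement of the filtration / "the adversary is oblivious to $A$'s current coin, in the relevant sense'' closes the gap. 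A secondary subtlety is making sure the $\difs$/$\sifs$ priority mechanism (so that ACKs beat new data frames) is actually used to guarantee the returning ACK is not itself pre-empted; this is where the side conditions $\textbf{disconnect}(*,*)$, $\textbf{connect}(*,*)$ and $\textbf{newpkt}(*,*)$ in $\phi^{\it post}$ — i.e. the restriction to weak packet delivery — do the work of keeping the environment benign while the single packet is in flight.
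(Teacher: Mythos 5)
Your proposal takes essentially the same route as the paper, which for this theorem gives only a two-line proof sketch: under the side conditions of \emph{weak} packet delivery the amount of competing traffic is bounded, while the contention windows grow without bound, so each retransmission attempt succeeds with probability bounded away from $0$ (indeed tending to $1$), and the product of the failure probabilities vanishes. Your write-up is a considerably more detailed elaboration of that same idea (finite attempts, the $nc/W_k$ bound on unsafe slots, the conditional Borel--Cantelli step), and the subtleties you flag about the scheduler observing the coin flips are glossed over in the paper's sketch as well.
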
 
\begin{proofsketch}
Since the number of messages that nodes
transmit is bounded, and all nodes select random
times to start transmitting out of an increasing longer time span, with probability 1 each message
will eventually go through.
\hfill$\Box$
\end{proofsketch}
In practice, $\maxRetransmit$ is set to a value that is not high enough to approximate the idea
behind the above proof. In fact, the transmission time of a single message may be larger than the
maximal backoff period allowed. For this reason  the hidden station problem
does occur when running the CSMA/CA protocol, as studies have shown~\cite{Comer09}.
Nevertheless, the above analysis still shows that link layer protocols can be formally analysed by
process algebra in general, and \all\ in particular.

\subsection{A Formal Model for CSMA/CA with Virtual Carrier Sensing}\label{sec:csma_rts}
To overcome the hidden station problem the usage of a request-to-send/clear-to-send (\rts/\cts) handshaking \cite{IEEE80211} mechanism is available. 
This mechanism is also known as \emph{virtual carrier sensing}.
The exchange of \rts/\cts messages happens just before the actual data is sent, see \Fig{rtscts}.
\begin{figure}[t]\centering
\includegraphics{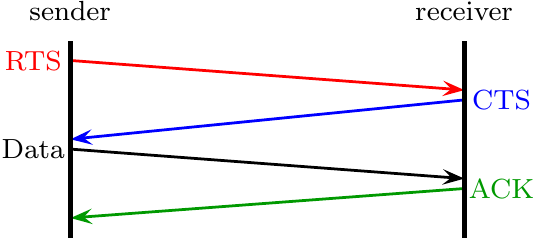}
\vspace{-4mm}
\caption{\rts/\cts exchange}\label{fig:rtscts}
\vspace{-4mm}
\end{figure}
The mechanism serves two purposes: 
(a) As the \rts and \cts messages are very short---they only contain two node identifiers as well as a natural number indicating the time it will take to send the actual \data\ (plus overhead)---the likelihood of a collision is reduced.
(b) While the handshaking does not help with solving the hidden station problem for the \rts message itself, it avoids the problem for the sending of \data. The reason is that a hidden node, which could interfere with the sending of \data\ will receive the \cts\ message from the designated recipient of \data\, and the hidden node will remain silent until the \data\ has been sent.

As for the CSMA/CA protocol we have modelled this extension in \all, based on the model of CSMA/CA we presented earlier.

Our extended model uses two functions to generate $\rtsframeID$ and $\ctsframeID$ messages, respectively.
The signature of both is $\tIP\times\tIP\times\tTIME\to\tMSG$. The first argument carries the sender (source) of the message, the second the indented destination, and the third argument 
a duration (time period) of silence that is requested/granted.
{%
\newcommand{\myplus}{\hspace{.95pt}{+}\hspace{.95pt}}%
For example, before the message $\rtsframe{\source}{\dest}{\keyw{d}}$ is transmitted, the time period \keyw{d} is calculated by\\
\centerline{\small
$\update{\keyw{d} := \sifs \myplus \durCTS \myplus \sifs \myplus \dur{\dataframe{\data}{\myip}{\dest}}\myplus\sifs\myplus\durAck}$\;\!.
}\pagebreak[3]\vspace{1mm}

\noindent
The calculation is straightforward as it follows the protocol logic and determines the amount of time needed until the acknowledgement would be 
\begin{figure}[t]\centering
\includegraphics[scale=0.85]{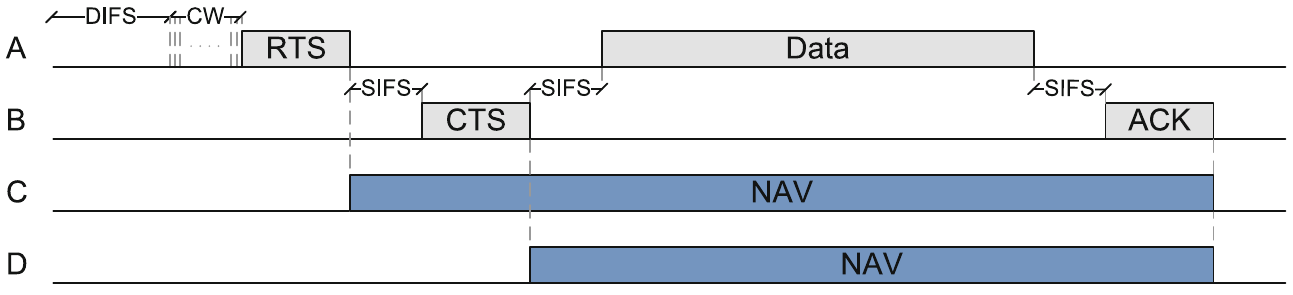}
\vspace{-4mm}
\caption{The use of virtual channel sensing using CSMA/CA~\cite{Calder14}}\label{fig:caderrts}
\vspace{-3mm}
\end{figure}
received (see \Fig{caderrts}). 
After the $\rtsframeID$ message has been received the medium should be idle for the interframe space $\sifs$;
then a $\ctsframeID$ message is sent back, which takes time $\durCTS$;
then another interframe space is needed, followed by the actual transmission of the message---the sending will take $\dur{\dataframe{\data}{\myip}{\dest}}$ time units; 
after the message is received (hopefully) another  interframe space is required before the acknowledgement is sent back.
}

Process \ref{pro:p_init} remains essentially unchanged;
it is merely equipped with the destination $\dest$ of the message that needs to be
transmitted, and an additional timed variable $\nav\in\tTIME$.
These variables are not used in this process, but required later on.
Variable $\nav$ holds the point in time until the process should not transmit any $\rtsframeID$ or $\ctsframeID$ message.
This period of silence is necessary as the node figures out that until time $\nav$ another node will transmit message(s).%
\footnote{After a successful RTS/CTS exchange, communicating nodes proceed with transmitting the
  data and an acknowledgement regardless of the value of $\nav$.}

\Proc{rts_csma} is the modified version of \Proc{p_csma}.
Identical to \Proc{p_csma} it awaits an instruction from the \application layer, or an incoming CSMA
message destined for itself.
Lines~\ref{pro:rts_csma:line1}--\ref{pro:rts_csma:line3} are identical to \Proc{p_csma}.
Lines~\ref{pro:rts_csma:line4}--\ref{pro:rts_csma:line11} handle the two new message types.
In case an $\rtsframeID$ message $\rtsframe{\source}{\dest}{\keyw{d}}$ is received that is intended
for another recipient ($\dest\neq\myip$)
the node concludes that another node wants to use the medium for the amount of $\keyw{d}$ time units; 
the process updates the variable $\nav$ if needed, indicating the period the node should remain silent, by taking the maximum of the current value of $\nav$, 
and $\now{+}\keyw{d}$, the point in time until the sender $\source$ of the $\rtsframeID$ message
requires the medium. The same behaviour occurs if a 
$\ctsframeID$ message is received that is not intended for the node itself (\Line{rts_csma}{4}).
If the incoming message is an $\rtsframeID$ message intended for the node itself (\Line{rts_csma}{6})
by default the node answers with a clear-to-send message back to the sender (\Line{rts_csma}{9}).
However, when the receiver of the $\rtsframeID$ has knowledge about other nodes requiring the medium ($\now\leq\nav$),
a clear-to-send cannot be granted, and the request is dropped (\Line{rts_csma}{6}).
\begin{figure}[t]
  \vspace{-2ex}
  \vspace*{-2.65ex}
  \algsetup{linenodelimiter=.,linenosize=\tiny}
  \begin{algorithm}[H]
    {\footnotesize
      \caption{\small The Basic Routine (\rts/\cts)}
      \label{pro:rts_csma}
      \begin{algorithmic}[1]
        	\input{processes/rts_csma.tex}
		\end{algorithmic}
    }
  \end{algorithm}
        \vspace{-2.65ex}

  \vspace{-9.7ex}
\end{figure}
Similar to the sending of an acknowledgement (\Line{rts_csma}{2}), the process waits for the short
interframe space (\sifs) before sending the CTS  (\Line{rts_csma}{6}). 
\begin{figure}[t]
  \vspace{-2ex}
  \vspace*{-2.65ex}
  \algsetup{linenodelimiter=.,linenosize=\tiny}
  \begin{algorithm}[H]
    {\footnotesize
      \caption{\small Clear Channel Assessment With Virtual Carrier Sense}
      \label{pro:rts_cca}
      \begin{algorithmic}[1]
        	\input{processes/rts_cca.tex}
		\end{algorithmic}
    }
  \end{algorithm}
        \vspace{-2.65ex}

  \vspace{-5ex}
\end{figure}
\Line{rts_csma}{8} handles the case where the medium becomes busy ($\neg\idle$) during this period; also
here a clear-to-send cannot be granted, and the request is dropped.%
\footnote{The condition $\now>\timeout{-}\sifs$ prevents the process from dropping the request in
  the very first time slice that \CSMA\ is running. Here the medium counts as busy, but only because
  we have just received an $\rtsframeID$ message.}
Only when the medium stays idle during the entire interframe space
the node $\myip$ can inform the source of the $\rtsframeID$ message that the medium is clear to
send; the $\ctsframeID$ is transmitted in \Line{rts_csma}{9}. The time a receiver of this
message has to be silent is adjusted by deducting the time elapsed before this happens.\pagebreak[2]
In \Line{rts_csma}{10} the process resets $\nav$ to remind itself not to issue any $\rtsframeID$
message until the present exchange has been completed.%
\footnote{A case $\new(\ctsframe{\source}{\dest}{\keyw{d}})\wedge \dest=\myip$ is not required as a $\ctsframeID$ message is only expected in case an $\rtsframeID$ was sent, and hence handled in process \CTSSEND.}

\Proc{rts_cca} is the modified version of \Proc{p_cca}.
The goal of this process is to send an $\rtsframeID$ message (\Line{rts_cca}{22}).
Before it can start its work, it waits until the medium is idle, and
any time it is required to be silent has elapsed (\Line{rts_cca}{11}).
Until this happens incoming data frames, $\rtsframeID$ or
$\ctsframeID$ messages are treated just as in \Proc{rts_csma}: Lines~\ref{pro:rts_cca:line1}--\ref{pro:rts_cca:line10}
copy Lines~\ref{pro:rts_csma:line2}--\ref{pro:rts_csma:line11} of \Proc{rts_csma}, except that
afterwards the process returns to itself.
Then Lines~\ref{pro:rts_cca:line12}--\ref{pro:rts_cca:line20} are copied from
Lines~\ref{pro:cca:line7}--\ref{pro:cca:line15} from \Proc{p_cca}.
\Line{rts_cca}{21} calculates the time other nodes ought to keep silent when receiving the $\rtsframeID$
message, and \Line{rts_cca}{23} passes control to the process \CTSRECV, which awaits a $\ctsframeID$
response to the $\rtsframeID$ message transmitted in \Line{rts_cca}{22}.
The fourth argument of \CTSRECV\ specifies the maximum time that process should wait for such a response;
a good value for $\maxCtsWait$ is $\sifs + \durCTS$.

Process \CTSRECV\ listens for this time to a $\ctsframeID$ message with source $\dest$ and
destination $\myip$. In case the expected $\ctsframeID$ message arrives in time (\Line{rts_ctsrec}{2}),
the node waits for a time $\sifs$ (\Line{rts_ctsrec}{3})\pagebreak[4] and then transmits the data frame and
proceeds to await an acknowledgement (\Line{rts_ctsrec}{4}).
The fourth argument of \ACKRECV\ specifies the maximum time the process should wait for such an acknowledgement;
a good value for $\maxAckWait$ is $\sifs + \durAck$. 
If the $\ctsframeID$ message does not arrive in time (\Line{rts_ctsrec}{5}),
the process returns to \INIT\ to send another $\rtsframeID$ message, while incrementing the counter $\backoffexp$ (\Line{rts_ctsrec}{6}).
While waiting for the $\ctsframeID$ message, any incoming $\rtsframeID$ or $\ctsframeID$ message
destined for another node is treated exactly as in \Proc{rts_csma} (Lines~\ref{pro:rts_ctsrec:line4b}--\ref{pro:rts_ctsrec:line4c}).
Incoming data frames cannot arrive when this process is running, and incoming $\rtsframeID$ messages
to $\myip$ are ignored.

  \vspace*{-2.65ex}
  \algsetup{linenodelimiter=.,linenosize=\tiny}
  \begin{algorithm}[H]
    {\footnotesize
      \caption{\small Receiving a CTS}
      \label{pro:rts_ctsrec}
      \begin{algorithmic}[1]
\DEFPROCESS{\CTSRECV}{\myip\comma\dest\comma\backoffexp\comma\keyw{ctstimeout}\comma\frmvar\comma\nav}
	\STATE\cond{$\new(\ctsframe{\dest}{\myip}{\keyw{d}})$}                    \label{pro:rts_ctsrec:line2}
	\IFempty
		\UPD{\timeout:= \now+\sifs}                                             \label{pro:rts_ctsrec:line3}
		\cond{$\now\geq\timeout$}                                               \label{prcln:wait-for-dataframe}
		\STATE\transmit{\frmvar}\ .                                             \label{pro:rts_ctsrec:line4}
		\procAckRecv{\myip\comma\dest}{\backoffexp}{\now+\maxAckWait\comma\frmvar}{\nav}
		\ENDIFii
		\STATE$+$ \cond{$(\new(\rtsframe{\source}{\dest}{\keyw{d}}) \vee \new(\ctsframe{\source}{\dest}{\keyw{d}})) \wedge \dest \neq \myip \wedge \nav < \now\mathord+\keyw{d}$}\label{pro:rts_ctsrec:line4b}%
		\IFempty
 	\STATE\textcolor{brown}{\assignment{\nav:= \now{+}\keyw{d}}}\
	\procCtsRecv{\myip\comma\dest}{\backoffexp}{\keyw{ctstimeout}\comma\frmvar}{\nav} \label{pro:rts_ctsrec:line4c}
	\ENDIFii
	\STATE$+$ \cond{$\now \geq \keyw{ctstimeout}$}                                           \label{pro:rts_ctsrec:line5}%
	\IFempty
		\STATE \procInit{\myip\comma\dest}{\backoffexp{+}1}{\frmvar\comma\nav}  \label{pro:rts_ctsrec:line6}
	\ENDIFii

		\end{algorithmic}
    }
  \end{algorithm}
        \vspace{-2.65ex}

\mbox{}\vspace{-15pt}

  \vspace*{-2.65ex}
  \algsetup{linenodelimiter=.,linenosize=\tiny}
  \begin{algorithm}[H]
    {\footnotesize
      \caption{\small Receiving an ACK}
      \label{pro:rts_ackrecv}
      \begin{algorithmic}[1]
\DEFPROCESS{\ACKRECV}{\myip\comma\dest\comma\backoffexp\comma\acktimeout\comma\frmvar\comma\nav}
	\STATE\cond{$\new(\ackframe{\myip})$}                    \label{pro:rts_ackrecv:line2}
	\IFempty
		\deliverL{\returnSuccess}\ .
		\procCSMA{\myip\comma\nav}                             \label{pro:rts_ackrecv:line3}
		\ENDIFii
		\STATE$+$ \cond{$(\new(\rtsframe{\source}{\dest}{\keyw{d}}) \vee \new(\ctsframe{\source}{\dest}{\keyw{d}})) \wedge \dest \neq \myip
    \wedge \nav < \now+\keyw{d}$}\label{pro:rts_ackrecv:line4a}%
    \IFempty
 	\STATE\textcolor{brown}{\assignment{\nav:= \now{+}\keyw{d}}}\
        \procAckRecv{\myip\comma\dest}{\backoffexp}{\acktimeout\comma\frmvar}{\nav}	\label{pro:rts_ackrecv:line4b}
	\ENDIFii
		\STATE$+$ \cond{$\now \geq \acktimeout$}\COMMENT{nothing received}\label{pro:rts_ackrecv:line4}%
		\IFempty
		\STATE \procInit{\myip\comma\dest}{\backoffexp{+}1}{\frmvar\comma\nav}  \label{pro:rts_ackrecb:line5}
	\ENDIFii

		\end{algorithmic}
    }
  \end{algorithm}
        \vspace{-2.65ex}

\Proc{rts_ackrecv} handles the receipt of an acknowledgement in response to a successful data transmission.
If an acknowledgement arrives, it must be
from the node to which $\myip$ has transmitted a data frame. In that case (\Line{rts_ackrecv}{2}),
the \application\ layer is informed that the sending of $\frmvar$ was successful and the process loops back to \Proc{rts_csma}
(\Line{rts_ackrecv}{3}). \Line{rts_ackrecv}{4} describes the situation where no acknowledgement message
arrives and the process times out. Also here CSMA/CA retries to send the message; the counter
$\backoffexp$ is incremented. 
Lines~\ref{pro:rts_ackrecv:line4a}--\ref{pro:rts_ackrecv:line4b} describe the usual handling of 
incoming $\rtsframeID$ or $\ctsframeID$ messages destined for another node.

\subsection{The Exposed Station Problem}

Another source of collisions in CSMA/CA is the well-known \emph{exposed station problem}.
This refers to a linear topology $A - B - C - D$, where an unending stream of messages between
$C$ and $D$ interferes with attempts by $A$ to get a message across to $B$.
In the default CSMA/CA protocol as formalised in \Sect{csma-default}, transmissions from $A$ to $B$ may
perpetually collide at $B$ with transmissions from $C$ destined  for $D$.
CSMA/CA with virtual carrier sensing mitigates this problem, for a {\ctsframeID} sent by $B$ in response to
an {\rtsframeID} sent by $A$ will tell $C$ to keep silent for the required duration.
In fact, we can show that in the above topology, if $\maxRetransmit{=}\infty$
then packet delivery holds with probability 1.
A non-probabilistic guarantee cannot be given since nodes $A$ and $C$ could behave
in the same way, meaning
if one node is sending out a message the other does the same at the very same moment, and if one is silent the other remains silent as well. 
In this scenario all messages to be sent are doomed.

Based on our formalisation, we can prove that once the \rts/\cts handshake has been successfully
concluded, meaning that all nodes within range of the intended recipient have received the {\ctsframeID},
then packet delivery holds outright. So the only problem left is to achieve a successful \rts/\cts handshake.
Since {\rtsframeID} and {\ctsframeID} messages are rather short, even by modest values of $\maxRetransmit$ it becomes
likely that such messages do not collide.

In spite of this, CSMA/CA with (or without) virtual channel sensing cannot achieve packet delivery
with probability 1 for general topologies.
	Assume the following network topology
	\vspace{-2mm}
\begin{center}	
	\begin{tikzpicture}
		\draw (1,1) -- (2,1.8) -- (3,1.8);
		\draw (0,1) -- (1,1) -- (2,1) -- (3,1);
		\draw (1,1) -- (2,0.2) -- (3,0.2);
		\draw[fill=white] (0,1) circle (8pt);\draw (0,1) node{$B$};
		\draw[fill=white] (1,1) circle (8pt);\draw (1,1) node{$A$};
		\draw[fill=white] (2,1.8) circle (8pt);\draw (2,1.8) node{$C_1$};
		\draw[fill=white] (3,1.8) circle (8pt);\draw (3,1.8) node{$D_1$};
		\draw[fill=white] (2,1) circle (8pt);\draw (2,1) node{$C_2$};
		\draw[fill=white] (3,1) circle (8pt);\draw (3,1) node{$D_2$};
		\draw[fill=white] (2,0.2) circle (8pt);\draw (2,0.2) node{$C_3$};
		\draw[fill=white] (3,0.2) circle (8pt);\draw (3,0.2) node{$D_3$};
	\end{tikzpicture}
\end{center}

Here it may happen that one of the $C_i$s is always busy transmitting a large message to $D_i$;\pagebreak[3]
any given $C_i$ is occasionally silent (not sending any message), but then one of the others is transmitting.
As $C_i$ is disconnected from $C_j$, for $j\neq i$, coordination between the nodes is impossible.
As a consequence, the medium at $A$ will always be busy, 
so that $A$ cannot send an {\rtsframeID} message to $B$.

\section{Related Work}
The CSMA protocol in its different variants has been analysed with different formalisms in the past. 

Multiple analyses were performed for the CSMA/CD protocol (CSMA with collision detection), a predecessor of CSMA/CA that has a constant backoff, i.e.\ the backoff time is not increased exponentially, see 
\cite{Duflot04,DKN+13,Zhao04,JensenLS96,Parrow88}.
In all these approaches frame collisions have to be modelled explicitly, as part of the protocol description.
In contrast, our approach handles collisions in the semantics; thereby achieving a clear separation between protocol specifications and  link layer behaviour.

Duflot et al.\ \cite{Duflot04,DKN+13} use probabilistic timed automata (PTAs) to model the protocol, and use 
probabilistic model checking (\prism) and approximate model checking (\apmc) for their analysis.
The model explained in~\cite{Zhao04} is based on PTAs as well, 
but uses the model checker \uppaal as verification tool. 
These approaches, although formal, have very little in common with our approach. 
On the one hand it is not easy to change the model from CSMA/CD to CSMA/CA, as the latter requires unbounded data structures (or alike) to model the exponential backoff. On the other hand, as usual, model checking suffers from state space explosion and only small networks (usually fewer than ten nodes) can be analysed. This is sufficient and convenient when it comes to finding counter examples, but these approaches cannot provide guarantees for arbitrary network topologies, as ours does.

Jensen et al.\ \cite{JensenLS96} use models of CSMA/CD to compare the tools SPIN and \uppaal. 
Their models are much more abstract than ours.
It is proven that no collisions will ever occur, without stating the exact conditions under which this statement holds.

To the best of our knowledge, Parrow \cite{Parrow88} is the only one who uses  process algebra (CCS) to model and analyse CSMA. 
His untimed model of CSMA/CD is extremely abstract and the analysis performed is limited to two nodes only, avoiding scenarios such as the hidden station problem.

There are far fewer formal analyses techniques available when it comes to CSMA/CA (with and without virtual medium sensing).
Traditional approaches to the analysis of network protocols are simulation and test-bed experiments.
This is also the case for CSMA/CA (e.g.\ \cite{ChhayaGupta97}). While these are important and valid methods for protocol evaluation, in particular for quantitative performance evaluation, they have limitations in regards to the evaluation of basic protocol correctness properties.

\hspace{-3pt}Following the spirit of the above-mentioned research of model checking CSMA, Fruth~\cite{Fruth06} analyses CSMA/CA using PTAs and \prism. 
He considers properties such as the 
minimum probability of two nodes successfully completing their transmissions, and maximum expected number of collisions until two 
nodes have successfully completed their transmissions. 
As before, this analysis technique does not scale; in \cite{Fruth06} the experiments are limited to two contending nodes only.

Beyond model checking, simulation and test-bed experiments, we are only aware of two other formal approaches.
In \cite{Bianchi00} Markov chains are used to derive an accurate, analytical model to compute the throughput of CSMA/CA.
Calculating throughput is an orthogonal task to our vision of proving (functional) correctness.

An approach aiming at proving the correctness of CSMA/CA with virtual carrier sensing (\rts/\cts), and hence related to ours, is presented in 
\cite{Calder14}. Based on stochastic bigraphs with sharing it uses rewrite rules to analyse quantitative properties. 
Although it is an approach that is capable to analyse arbitrary topologies, to apply the rewrite rules a particular topology 
needs to be modelled by a directed acyclic graph structure, which is part of the bigraph.


\section{Conclusion}
In this paper we have proposed a novel process algebra, called \all, that can be used to model, verify and analyse link layer protocols.
Since we aimed at a process algebra featuring aspects of the link layer such as frame collisions, as well as arbitrary data structures (to model a rich class of protocols), we could not use any of the existing algebras. 
The design of \all is layered. The first layer allows modelling protocols in some sort of pseudo code, which hopefully makes our approach accessible for network and software researchers/engineers. The other layers are mainly for giving a formal semantics to the language. The layer of partial network expressions, the third layer, provides a unique and sophisticated mechanism for modelling the collision of {\dpkt}s. As it is hard-wired in the semantics there is no need to model collisions manually when modelling a protocol, as it was done before \cite{Parrow88}. 
Next to primitives needed for modelling link layer protocols (e.g.\ \textbf{transmit})
and standard operators of process algebra (e.g.\ nondeterministic choice), \all\ provides an operator for probabilistic choice. 

This operator is needed to model aspects of link layer protocols such as the exponential backoff for the Carrier-Sense Multiple Access with Collision Avoidance protocol, the case study we have chosen to demonstrate the applicability of \all.
We have modelled and analysed two versions of CSMA/CA, without and with virtual carrier sensing.
Our analysis has confirmed the hidden station problem for the version without virtual carrier sensing. 
However, we have also shown that the version with virtual carrier sensing overcomes not only this problem, but also the exposed station problem with probability 1. 
Yet the protocol cannot guarantee packet delivery, not even with probability 1.

To perform this analysis we had to formalise suitable liveness properties for link layer protocols specified in our framework. 

\subsubsection*{Acknowledgement:} We thank Tran Ngoc Ma for her involvement in this project in a very early phase. 
We also like to thank the German Academic Exchange Service (DAAD) that funded an internship of the third author at 
Data61, CSIRO.
\bibliographystyle{splncs03}
\bibliography{csma}

\begin{thebibliography}{10}
\providecommand{\url}[1]{\texttt{#1}}
\providecommand{\urlprefix}{URL }

\bibitem{Bianchi00}
Bianchi, G.: Performance analysis of the {IEEE 802.11} distributed coordination
  function. {IEEE} Journal on Selected Areas in Communications  18(3),
  535--547 (2000), \url{https://doi.org/10.1109/49.840210}

\bibitem{ESOP16}
Bres, E., van Glabbeek, R.J., H{\"o}fner, P.: A timed process algebra for
  wireless networks with an application in routing (extended abstract). In:
  Thiemann, P. (ed.) European Symposium on Programming (ESOP'16). \rm LNCS,
  vol. 9632, pp. 95--122. Springer (2016),
  \url{https://doi.org/10.1007/978-3-662-49498-1_5}

\bibitem{Calder14}
Calder, M., Sevegnani, M.: Modelling {IEEE 802.11 CSMA/CA RTS/CTS} with
  stochastic bigraphs with sharing. Formal Aspects of Computing  26(3),
  537--561 (2014), \url{https://doi.org/10.1007/s00165-012-0270-3}

\bibitem{ChhayaGupta97}
Chhaya, H.S., Gupta, S.: Performance modeling of asynchronous data transfer
  methods of {IEEE 802.11 MAC Protocol}. Wireless Networks  3,  217--234
  (1997), \url{https://doi.org/10.1023/A:1019109301754}

\bibitem{Comer09}
Comer, D.: Computer Networks and Internets. Pearson Education Inc., Upper
  Saddle River, NJ (2009)

\bibitem{CMR08}
Cranen, S., Mousavi, M.R., Reniers, M.A.: A rule format for associativity. In:
  Conference on Concurrency Theory (CONCUR'08). \rm LNCS, vol. 5201, pp.
  447--461. Springer (2008), \url{https://doi.org/10.1007/978-3-540-85361-9_35}

\bibitem{DV95}
De~Nicola, R., Vaandrager, F.W.: Three logics for branching bisimulation.
  Journal of the ACM  42(2),  458--487 (1995),
  \url{https://doi.org/10.1145/201019.201032}

\bibitem{DGHMZ07}
Deng, Y., van Glabbeek, R.J., Hennessy, M., Morgan, C.C., Zhang, C.: Remarks on
  testing probabilistic processes. In: Cardelli, L., Fiore, M., Winskel, G.
  (eds.) Computation, Meaning, and Logic: Articles dedicated to {Gordon
  Plotkin}, Electronic Notes in Theoretical Computer Science, vol. 172, pp.
  359--397. Elsevier (2007), \url{https://doi.org/10.1016/j.entcs.2007.02.013}

\bibitem{DGMZ07}
Deng, Y., van Glabbeek, R.J., Morgan, C.C., Zhang, C.: Scalar outcomes suffice
  for finitary probabilistic testing. In: De~Nicola, R. (ed.) European
  Symposium on Programming (ESOP'07). \rm LNCS, vol. 4421, pp. 363--378.
  Springer (2007), \url{https://doi.org/10.1007/978-3-540-71316-6_25}

\bibitem{Duflot04}
Duflot, M., Fribourg, L., Herault, T., Lassaigne, R., Magniette, F., Messika,
  S., Peyronnet, S., Picaronny, C.: Probabilistic model checking of the
  {CSMA/CD}, protocol using {PRISM} and {APMC}. In: Automated Verification of
  Critical Systems (AVoCS'04). Electronic Notes in Theoretical Computer Science
  Series, vol. 128, pp. 195--214 (2004),
  \url{https://doi.org/10.1016/j.entcs.2005.04.012}

\bibitem{DKN+13}
Duflot, M., Kwiatkowska, M., Norman, G., Parker, D., Peyronnet, S., Picaronny,
  C., Sproston, J.: Practical applications of probabilistic model checking to
  communication protocols. In: Gnesi, S., Margaria, T. (eds.) Formal Methods
  for Industrial Critical Systems: A Survey of Applications, pp. 133--150.
  {IEEE} (2013), \url{https://doi.org/10.1002/9781118459898.ch7}

\bibitem{ESOP12}
Fehnker, A., van Glabbeek, R.J., H{\"o}fner, P., McIver, A.K., Portmann, M.,
  Tan, W.L.: A process algebra for wireless mesh networks. In: Seidl, H. (ed.)
  European Symposium on Programming (ESOP'12). \rm LNCS, vol. 7211, pp.
  295--315. Springer (2012), \url{https://doi.org/10.1007/978-3-642-28869-2_15}

\bibitem{TR13}
Fehnker, A., van Glabbeek, R.J., H{\"{o}}fner, P., McIver, A.K., Portmann, M.,
  Tan, W.L.: A process algebra for wireless mesh networks used for modelling,
  verifying and analysing {AODV}. Technical Report 5513, NICTA (2013),
  \url{http://arxiv.org/abs/1312.7645}

\bibitem{Friend88}
Friend, G.E., Fike, J.L., Baker H. C.and~Bellamy, J.C.: Understanding Data
  Communications. Howard W. Sams \& Company, 2nd edn. (1988)

\bibitem{Fruth06}
Fruth, M.: Probabilistic model checking of contention resolution in the {IEEE}
  802.15.4 low-rate wireless personal area network protocol. In: Leveraging
  Applications of Formal Methods, Second International Symposium (ISoLA'06).
  pp. 290--297. {IEEE} Computer Society (2006),
  \url{https://doi.org/10.1109/ISoLA.2006.34}

\bibitem{IEEE8023}
{IEEE}: {IEEE} standard for ethernet (2016),
  \url{https://doi.org/10.1109/IEEESTD.2016.7428776}

\bibitem{IEEE80215}
{IEEE}: {IEEE} standard for low-rate wireless networks (2016),
  \url{http://doi.org://10.1109/IEEESTD.2016.7460875}

\bibitem{ISOIEC7498}
{ISO/IEC 7498-1}: Information technology --- open systems interconnection ---
  basic reference model: The basic model (1994),
  \url{https://www.iso.org/standard/20269.html}

\bibitem{IEEE80211}
{ISO/IEC/IEEE 8802-11}: Information technology --- telecommunications and
  information exchange between systems --- local and metropolitan area networks
  --- specific requirements --- part 11: Wireless {LAN} medium access control
  {(MAC)} and physical layer {(PHY)} specifications (2018),
  \url{https://www.iso.org/standard/73367.html}

\bibitem{JensenLS96}
Jensen, H.E., Larsen, K.G., Skou, A.: Modelling and analysis of a collision
  avoidance protocol using {Spin} and {Uppaal}. In: The Spin Verification
  System. Discrete Mathematics and Theoretical Computer Science, vol.~32, pp.
  33--50. {DIMACS/AMS} (1996), see
  \url{https://doi.org/10.7146/brics.v3i24.20005}

\bibitem{Parrow88}
Parrow, J.: Verifying a {CSMA/CD}-protocol with {CCS}. In: Aggarwal, Sabnani
  (eds.) IFIP Symposium on Protocol Specification, Testing and Verification
  (PSTV'88). pp. 373--384. North-Holland (1988)

\bibitem{Pn77}
Pnueli, A.: The temporal logic of programs. In: Foundations of Computer Science
  {\rm (FOCS'77)}. pp. 46--57. IEEE (1977),
  \url{https://doi.org/10.1109/SFCS.1977.32}

\bibitem{dS85}
\simone{de Simone}, R.: Higher-level synchronising devices in {{\sc
  Meije}-SCCS}. TCS  37,  245--267 (1985),
  \url{https://doi.org/10.1016/0304-3975(85)90093-3}

\bibitem{rfc1661}
Simpson, W.: The point-to-point protocol {(PPP)}. RFC 1661 Internet Standard
  (1994), \url{http://www.ietf.org/rfc/rfc1661.txt}

\bibitem{SRS10}
Singh, A., Ramakrishnan, C.R., Smolka, S.A.: A process calculus for mobile ad
  hoc networks. Science in Computer Programming  75,  440--469 (2010),
  \url{https://doi.org/10.1016/j.scico.2009.07.008}

\bibitem{Zhao04}
Zhao, J., Li, X., Zheng, T., Zheng, G.: Removing irrelevant atomic formulas for
  checking timed automata efficiently. In: Larsen, K., Niebert, P. (eds.)
  Formal Modeling and Analysis of Timed Systems (FORMATS'04). \rm LNCS, vol.
  2791, pp. 34--45. Springer (2004),
  \url{https://doi.org/10.1007/978-3-540-40903-8_4}

\end{thebibliography}

\end{document}